\newtheorem{lemma}{Lemma}
\newtheorem{remark}{Remark}
\newtheorem{theorem}{Theorem}
\newcolumntype{C}[1]{>{\centering\arraybackslash}m{#1}}
\begin{document}
\title{Passive Crowd Speed Estimation in Adjacent Regions With Minimal WiFi Sensing}
\author{{ Saandeep Depatla and Yasamin Mostofi 
		\thanks{The authors are with the Department of Electrical and Computer Engineering,
			University of California Santa Barbara, Santa Barbara, CA 93106, USA email:
			$\{$saandeep, ymostofi$\}$@ece.ucsb.edu. This work is funded by NSF CCSS award \# 1611254.}}

}

\IEEEtitleabstractindextext{%
\begin{abstract}
In this paper, we propose a methodology for estimating the crowd speed using WiFi devices without relying on people to carry any device. Our approach not only enables speed estimation in the region where WiFi links are, but also in the adjacent possibly WiFi-free regions. More specifically, we use a pair of WiFi links in one region, whose RSSI measurements are then used to estimate the crowd speed, not only in this region, but also in adjacent WiFi-free regions. We first prove how the cross-correlation and the probability of crossing the two links implicitly carry key information about the pedestrian speeds and develop a mathematical model to relate them to pedestrian speeds. We then validate our approach with 108 experiments, in both indoor and outdoor, where up to 10 people walk in two adjacent areas, with variety of speeds per region, showing that our framework can accurately estimate these speeds with only a pair of WiFi links in one region. For instance, the NMSE over all experiments is 0.18. We also evaluate our framework in a museum-type setting and estimate the popularity of different exhibits. We finally run experiments in an aisle in Costco, estimating key attributes of buyers' behaviors.

\end{abstract}

\begin{IEEEkeywords}
Crowd speed estimation, Crowd analytics with WiFi, Device-free sensing, Crowd behavior sensing, Retail analytics.
\end{IEEEkeywords}}

\maketitle
\IEEEdisplaynontitleabstractindextext
\IEEEpeerreviewmaketitle
\IEEEraisesectionheading{\section{Introduction}\label{sec_introduction}}
\IEEEPARstart{C}onsider an area that consists of a number of regions, such as a shopping mall, a retail store, a museum, or a train station.  People may have different average speeds in different regions, depending on the specifics of the regions in terms of popularity, usefulness, or ease of traversing, among other factors.  For instance, one region of a department store can be more popular than other regions, resulting in people slowing down.  A particular exhibit may be less popular in a museum, resulting in people speeding up. Finally, people may slow down in a specific part of a train station due to an ongoing construction work. Thus, the specifics of a particular region can directly affect the speed of the visitors in the corresponding region, as studies have also shown \cite{franvek2013environmental}. In this paper, we are interested in estimating such \textbf{region-dependent speeds}. 
Since a person may not have a constant speed in a region, in this paper \textbf{``speed estimation'' refers to estimating the \textit{average speed} of the people in each region, where the average is the spatial average of the speed of a person in that particular region.}\footnote{We may drop the term ``average'' throughout the paper for brevity.} In other words, people can stop several times in a region, or change their instantaneous speed. We are then interested in estimating their average speed, which is region-dependent and can thus reveal valuable information about the regions.
\begin{figure*}[t]
\centering
\includegraphics[width=0.95\textwidth]{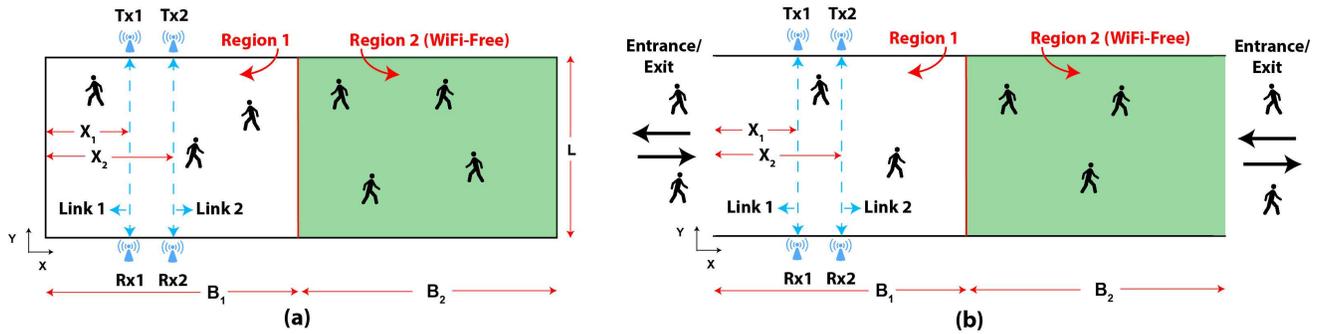}
\vspace{-0.05in}
\caption{Two example scenarios of the problem of interest, where an area consists of two regions, Region 1 and Region 2, as indicated. People move casually throughout the area with a specific speed in each region. A pair of WiFi links are located in Region 1. We are then interested in estimating the region-dependent speeds of both regions, based on only WiFi RSSI measurements of the links and signal availability in Region 1. (a) shows an example of a closed area,
such as an exhibition or a museum, where the total number of people inside the area changes slowly with time and people can traverse back and forth or change directions inside the area any number of times depending on their interest, whereas (b) shows an example of an open area such as a train station, where people can come and go from both regions and can form flow directions.}
\vspace{-0.2in}
\label{fig_illustration}
\end{figure*}
Fig. \ref{fig_illustration} shows two example scenarios of the problem of interest (a closed and an open area), with two adjacent regions. We are interested in estimating the region-dependent speeds of the pedestrians in these two regions, with a pair of WiFi links in only one region (e.g., Region 1 in Fig. \ref{fig_illustration}). More importantly, we are interested in such estimations in large areas. Then, movements of people in Region 2 may not directly affect the links in Region 1. For instance, the WiFi signal may be too weak by the time it gets to Region 2, resulting in a WiFi-free Region 2. As such, we are interested in estimating the crowd speed not only in Region 1 where the links are, but also in the adjacent possibly WiFi-free regions.\textbf{ Estimation of the speeds in both regions, by relying on sensing and WiFi signal availability in only one region, is what we refer to as speed estimation with sensing in only one region in this paper}.\footnote{We emphasize that our approach works the same if the adjacent region is not WiFi-free, or if the movements of people in Region 2 affect the transmitted signals. In other words, our proposed approach does not rely on the availability of the transmitted signals in the adjacent regions and as such can work equally well if the adjacent areas are WiFi-free.} Finally, we are interested in crowd speed estimation without relying on people to carry any device, to which we refer as \textbf{passive} speed estimation.

\textbf{Motivating Examples:} The ubiquity of inexpensive, low-cost, and low-power Internet-of-Things (IoT) sensors 
present great opportunities for learning about our surroundings leading to IoT-enabled smart ambiance. The knowledge of people's walking speed in a particular region can be useful for several applications. For instance, retail stores can learn about the popularity of the products on different aisles, if they know buyers' speeds in different parts of the stores.  Consider an aisle in a retail store containing a specific type of product, for instance. Shoppers that are entering this aisle will walk at a normal pace if the products in the aisle do not attract their attention. On the other hand, they may slow down, or stop to look at the items if they find them of interest. Therefore, by estimating the average speed of the pedestrians in an aisle, the popularity of the products in that aisle can be inferred. This information, in turn, can significantly help with business planning. Similarly, museums can estimate which of their exhibits are more popular, based on the speeds of the visitors. For instance, consider a museum with different exhibits. The visitors typically slow down and spend more time in the exhibit that interests them more. Therefore, by estimating the average speed of the visitors in each exhibit, the popularity of the corresponding exhibit can be inferred. Smart cities can further design the traffic signal timings for the pedestrian crosswalks based on their speeds \cite{laplante2004continuing}.  Furthermore, identifying the slow areas can further help with city planning such as allocation of new roads and facilities, or design of a shopping center. Public places, such as a train station, can further detect abnormal behaviors if an atypical slow down is estimated in a particular area.  Resources can then be allocated accordingly.
\subsection{Related Work}
In this section, we discuss the state-of-the-art for estimating the speed of a crowd.

\textbf{Infrared-Based Approaches: } 
Infrared (IR) sensors can be utilized to sense human activities in an environment. For instance, it has been proposed for counting the total number of people \cite{liu2016occupancy,yang2016review}, or for tracking human motion \cite{yang2014novel}. More recent work has explored classifying the speed of human motion using passive IR sensors. For instance, \cite{yun2014human} classifies the speed of a single person walking in a hallway as slow, moderate, or fast using three IR sensors. A training phase in which a single person walks at different speeds is first utilized to train a classifier, which is then used to classify the speed of a person. This work, however, only considers a single pedestrian. In general, there is no existing IR-based work that can estimate the speed of a crowd of people, or do it with sensing in only one region. More importantly, while IR sensors may be available at the entrance and exit of a retails store, they need to be installed throughout the store for collecting analytics, whereas smart IoT WiFi devices already exist throughout most stores. Nevertheless, we note that the method we propose in this paper can also be implemented with active IR, instead of WiFi, to enable speed estimation of a crowd with IR.

\textbf{Vision-Based Approaches:} Vision-based methods can potentially be used to estimate the speed of pedestrians in the immediate area where the cameras are installed \cite{khanestimating,sourtzinos2011highly,wang2014pedestrian,dridi2014tracking}.  These methods involve using cameras to continuously record a video of the scene in which the pedestrians are walking, followed by computer-vision algorithms to estimate the speeds. However, while consumers are fine with security cameras being probed in an on-demand manner for security purposes, serious privacy concerns arise when cameras are utilized in public places to analyze customer behaviors. For instance, a recent survey on retail shoppers \cite{video_analytics_survey2015} revealed that 75\% of the people who understood the capabilities of vision-based tracking technologies found it intrusive for retails to track their behavior using such a technology. 
Furthermore, employing such tracking techniques could lead to shoppers choosing not to visit the corresponding stores, as reported in \cite{opinionlab_survey}. In summary, vision-based tracking and speed estimation methods have the major drawback of privacy violation. Moreover, vision-based methods involve installing cameras and utilizing complex computer-vision algorithms which can be expensive. For instance, Walmart discontinued its in-store vision-based tracking technology after a few months, as it was too expensive \cite{walmart}. Finally, vision-based techniques can only estimate the speed of people in the areas that are in the direct line-of-sight of the cameras.

Radio Frequency (RF) signals, on the other hand, can alleviate some of the drawbacks associated with the vision-based systems.  For this reason, there has been a considerable interest in using RF signals for estimating some of the characteristics of a pedestrian flow, such as the number of pedestrians in an area \cite{depatla2015occupancy,depatla2018percom,xi2014electronic}, the locations of the people \cite{xu2013scpl}, the walking direction \cite{wu2016widir}, the walking speed \cite{jiang2014communicating}, and other sensing applications \cite{karanam20173d}. In particular, the work on speed estimation, using RF signals, can be classified into device-free passive and device-based active methods, as we summarize next.

\textbf{Device-Based Active RF Approaches:} Device-based active methods depend on the information provided by a mobile  device carried by the pedestrians, such as the Medium Access Control (MAC) data, to track people. However, these methods require the shoppers to carry a wireless device, or an on-body sensor, which limits their applicability. More importantly, if a store is to use shoppers' devices to gather store analytics, it can only gather crude, low resolution tracking data, based on monitoring which router the device is connected to in the store (i.e., this data may not directly translate to speed estimation in different aisles). Even then, serious privacy concerns limit the applicability of such an approach in public places. For instance, Nordstrom, a clothing company which implemented an active WiFi-based in-store tracking technology to analyze the behavior of their customers, withdrew it due to privacy concerns of the shoppers \cite{nordstorm}. Furthermore, a recent survey on active WiFi tracking technology \cite{opinionlab_survey} revealed that $80$\% of the shoppers do not like to be tracked based on their smartphones, while $43$\% do not want to shop at a store that employs active WiFi tracking technology.

\textbf{Device-Free Passive RF Approaches: }The device-free passive methods, on the other hand, leverage the interaction of RF signals with the pedestrians and hence do not require the pedestrians to carry any device. In this manner, they can preserve the privacy. Among the device-free methods, \cite{sigg2014telepathic} classifies the speed of a single person walking in a circle of radius $2\ m$
, based on the RSSI measurements of a mobile phone located at the center of the circle.
A prior training phase, in which RSSI measurements are collected when a single person is walking in the area with three different speeds, is utilized. \cite {shi2014monitoring} classifies the speed of a single person, using FM radio receivers.
Similarly, a training phase in which a person walks at different speeds is used.  However, in these work, only one person is considered in the area and classification of a single speed is performed based on extensive prior training. In realistic scenarios, such as in public places, there will be several pedestrians walking at the same time. In \cite{bocca2014multiple}, RSSI measurements of several WiFi links are used to track up to $4$ people 
walking in the same area. Such an approach can in principle be extended towards speed estimation. However, this and other work on tracking \cite{zhang2013rass} typically have to assume very few people (less than $5$)
. Moreover, in order to estimate region-dependent speeds of a crowd of pedestrians, there is no need to track every individual, as we shall see in this paper. 

In our previous work (conference version of this work \cite{depatla2018secon}), we have shown how to estimate the walking speed of multiple people in a single region (i.e., when people are walking with same speed throughout the region). This is a special case of the scenario considered in this paper where the speed of people is the same in both the regions. In this paper, we then build on our previous work to develop a generalized methodology that can estimate the speed of a crowd in two adjacent regions, where people can walk with different speeds in each region, based on only WiFi sensing in one region.
\subsection{Goals and Contributions}
To the best of our knowledge, passive estimation of the speeds of a crowd in multiple regions, with ubiquitous IoT devices utilizing RF sensing in only one region, has not been explored, which is the main motivation for the proposed work. More specifically, our goal in this paper is to estimate the region-dependent (average) speeds of a crowd of pedestrians in two adjacent regions, without a need for them to carry any wireless device, and by measuring the Received Signal Strength (RSSI) of a pair of WiFi links in only one region.  Our approach enables the estimation of the speed not only in the region where the pair of links are, but also in the adjacent WiFi-free regions as well.  It further shows that it is indeed possible to estimate the motion attributes of a crowd in RF-free zones.  Fig.~\ref{fig_illustration} shows two sample scenarios with two regions, Region 1 and Region 2, and with a region-dependent speed, i.e., people walk with (average) speed of $v_1$ in Region 1 and (average) speed of $v_2$ in Region 2. Two links are installed in Region 1, as can be seen. We are then interested in estimating these region-dependent speeds, based on only the RSSI measurements of the links in Region 1, and without relying on any impact people may have on the links when in Region 2. We next summarize our key contributions:
\begin{itemize}[leftmargin=*]
\item We mathematically characterize the probability of crossing a link, by using a Markov chain modeling and borrowing theories from statistical data analysis. Our results reveal the functional dependency of the probability of crossing on the speeds in both regions. They further indicate how different attributes of the two regions, such as the dimensions of the regions, impact the probability of crossing.
\item We show how the average speeds of the two regions can be estimated using the probability of crossing and the cross-correlation of the two links. To the best of our knowledge, this is the first time that the speeds of a crowd in multiple regions are passively estimated with WiFi.  Moreover, this is the first time the speeds of adjacent WiFi-free regions are estimated. \textbf{It is noteworthy that our approach does not require a training phase where people walk in the area with different speeds beforehand.}
\item We conduct a total of $108$ 
experiments, with up to $10$ people walking in both an indoor and an outdoor area that has two regions, with a variety of speeds per region, and show that our approach can accurately estimate the speeds of pedestrians in the two adjacent regions by using the RSSI measurements of a pair of WiFi links located in one region. For instance, the Normalized Mean Square Error (NMSE) of our speed estimation over all the experiments is $0.18$. Furthermore, the overall classification accuracy, when crowd's speed is categorized as slow, normal, or fast, is $85\%$. Finally, \textbf{the sensing is minimal} in the sense that the number of links per the total size of the area to be monitored is considerably small (e.g., 2 links per $ 14\ m\ \times 4.5 \ m$). 
\item We further validate our framework in a museum setting, where there are two exhibitions each containing very different types of displays. We then estimate the region-dependent average speeds of the invited visitors and thus deduce which exhibit was more popular. We finally run an experiment in an aisle in Costco, estimate key attributes of buyers' motion behaviors, and deduce the interest of the buyers in the products in that aisle.
\end{itemize}
We note that while we showcase our approach with $2$ regions, our approach can be easily extended to speed estimation in $M$ adjacent regions for any $M>2$, with minimal sensing i.e., with sensing in less than $M$ regions. The rest of the paper is organized as following. In Section \ref{sec_prob_formulation}, we discuss the problem setup. In Section \ref{sec_speed_estimation}, we mathematically characterize two key statistics, the probability of crossing and the cross-correlation between a pair of WiFi links, and show how they carry vital information on the speeds of pedestrians in both regions, and present a methodology to estimate these speeds accordingly. In Section \ref{sec_exp_result}, we thoroughly validate our framework with several experiments. 
We conclude in Section \ref{sec_conclusions}.

\begin{section}{Problem Setup}\label{sec_prob_formulation}
Consider the scenario where $N$ pedestrians are walking in an area that consists of two adjacent regions, Region 1 and Region 2, with region-dependent speeds, as shown in Fig.~\ref{fig_illustration}. 
A pair of WiFi links are located in one region, which make RSSI measurements as people walk in the two regions. The goal of this paper is to estimate the speeds of the pedestrians in the two adjacent regions, using the WiFi measurements of the links located in one region. To keep the paper applicable to many scenarios, we consider two possible general cases, as shown in Fig.~\ref{fig_illustration}. The first case (Fig.~\ref{fig_illustration}a), can represent a museum, a conference, or an exhibit-type setting where the total number of people inside the overall area changes slowly with time such that it can be considered constant over a small period of time. People can have any motion behavior in this area and can possibly traverse the area several times back and forth, through different regions, depending on their interest. The second case (Fig.~\ref{fig_illustration}b), on the other hand, captures
the cases where people can enter and exit through both regions, and can form flow directions through the area. Then the total number of people can change rapidly with time and cannot be considered a constant. This case represents scenarios like train stations or a store aisle.

As we show in this paper, the estimation of the region-dependent speeds can be achieved for both cases under the same unifying framework. We assume that, $N$, the total number of people in the area (or $N_\textnormal{avg}$, the average number of people for time-varying cases such as Fig.~\ref{fig_illustration}b) is known.
Assuming the knowledge of the total number of people in the area is reasonable for many applications. For instance, in stores, there may be mechanisms (such as door sensors) to count the total number of people in the store.  Then, it would be of interest to estimate the speed of shoppers in different regions.  We further note that the total number of people can also be estimated with additional sensing in the area.  Thus, in this paper we focus on estimating the region-dependent speeds, assuming $N$ (or $N_\textnormal{avg}$), and based on minimal sensing in only one region. 
In this section, we summarize a simple motion model for the pedestrians and briefly discuss their impact on the links. This is then followed by our proposed methodology for estimating the region-dependent speeds in the next section.
\subsection{Pedestrian Motion Model}
In this paper, we assume that people move casually in the two adjacent regions and do not assume any specific pattern for their motion. To describe a casual motion, we then use the simple mathematical model of \cite{depatla2015occupancy}, which we briefly summarize next. Consider the motion of a single person in the workspace of Fig.~\ref{fig_illustration}. Let $x(k)$, $y(k)$, and $\theta(k)$ denote the position along x-axis, the position along y-axis, and the heading of the person w.r.t. the x-axis, at time $k$, respectively. Since the person walks casually in the area, he/she keeps walking in a particular direction, while occasionally changing the direction of motion. This can be captured by using the following model for the heading direction:
\begin{equation}\label{eq_motion_heading}
\theta(k+1) =
\begin{cases}
\theta(k)\hspace{-0.03in} &\textnormal{\ with \ probability\ } p \\
\textnormal{Uniformly in\ } \mu \hspace{-0.03in} &\textnormal{\ with \ probability\ }1-p
\end{cases}
\end{equation}
where $\mu = [-\theta_{\textnormal{max}},\ \theta_{\textnormal{max}}] \cup [\pi-\theta_{\textnormal{max}},\ \pi+\theta_{\textnormal{max}}]$, for the case of Fig. \ref{fig_illustration}a since people can change their direction any time and can traverse the area back and forth as many times as they wish. $\theta_{\textnormal{max}}$ then defines the maximum angle for the direction of motion. 
For instance, when $\theta_{\textnormal{max}} = 90 \degree$, the person can choose any direction in $[0, 2\pi]$. Then, $\theta_{\textnormal{max}}$ allows us to model the motion depending on the environment and scenario. For instance,  $\theta_{\textnormal{max}}$ is typically less than $90\degree$ in long hallways  \cite{weidmann2014pedestrian}. For the case of  Fig.~\ref{fig_illustration}b, we assume that people mainly travel in a forward direction. Thus, we take $\mu = [-\theta_{\textnormal{max}},\ \theta_{\textnormal{max}}]  \textnormal{\ or\ } \mu =[\pi-\theta_{\textnormal{max}},\ \pi+\theta_{\textnormal{max}}]$ depending on the direction of motion.

Based on Eq.~(\ref{eq_motion_heading}), the position dynamics are then given as follows:
\begin{equation}\label{eq_motion_xpos}
x(k+1)=
\begin{cases}
 x(k)+v_1\delta t\ \textnormal{cos}(\theta(k)) &\textnormal{\ if\ } 0 \leq x(k) < B_1 \\
 x(k)+v_2\delta t\ \textnormal{cos}(\theta(k)) &\textnormal{\ if\ } B_1 \leq x(k) < B
\end{cases},
\end{equation}
\begin{equation}\label{eq_motion_ypos}
y(k+1)=
\begin{cases}
 y(k)+v_1\delta t\ \textnormal{sin}(\theta(k)) &\textnormal{\ if\ } 0 \leq x(k) < B_1 \\
 y(k)+v_2\delta t\ \textnormal{sin}(\theta(k)) &\textnormal{\ if\ } B_1 \leq x(k) < B
\end{cases},
\end{equation}
where $\delta t$ is the time step, and $B=B_1+B_2$. For the case of Fig.~\ref{fig_illustration}a, we assume that when a person encounters any of the four boundaries of the area, he/she reflects off of the boundary, similar to a ray of light.\footnote{This boundary behavior is only assumed for the purpose of modeling. In our experiments, we have no control over how people walk.}  For the open area of Fig.~\ref{fig_illustration}b, on the other hand, we assume a mainly forward flow from each entrance towards the opposite exit. Then, the person exits the area upon reaching the opposite exit. 
We then use this motion model in the next section when developing our methodology for estimating the speeds.
\subsection{Effect of Pedestrians on the WiFi Signals}
As shown in Fig.~\ref{fig_illustration}, a pair of WiFi links located in Region 1, make wireless measurements as the pedestrians walk in the two regions. When a pedestrian (or multiple) crosses a link, the corresponding RSSI measurement will drop, to which we refer as Line of Sight (LOS) blockage.  When people do not block the LOS but they are in the vicinity of a link, they can still impact the received signal through multipath. 
The proposed methodology of this paper is based on utilizing only  the LOS blockage impact. In Section \ref{sec_init_data_process}, we show how to estimate the LOS blockage sequence from the received RSSI measurements.
\end{section}
\begin{section}{Estimation of pedestrian speeds}\label{sec_speed_estimation}
In this section, we propose a framework to estimate the region-dependent speed of pedestrians in two adjacent regions, using a pair of WiFi links located in only one region, as shown in Fig.~\ref{fig_illustration}. More specifically, we first derive a mathematical expression for the probability of pedestrians crossing a WiFi link. We then characterize the cross-correlation between the two links. Our analysis shows that these parameters carry key information on the speeds of the pedestrians in both regions, which we then use to estimate the speeds.  A key feature of our approach is that it only relies on WiFi signal availability in the region where the links are but can deduce the speed of the crowd in the adjacent possibly WiFi-free region. In this section, we first characterize the probability of crossing and the cross-correlation for the case of the closed area of Fig.~\ref{fig_illustration}a, the analysis of which is more involved since a person can reverse the direction of motion anytime and can bounce
back and forth in the area as many times as he/she wishes. We then show how to extend the analysis to the case of open area  of Fig.~\ref{fig_illustration}b, putting everything under one unifying umbrella.

\subsection{Probability of Crossing a Link}\label{sec_prob_cross}
Consider Fig.~\ref{fig_illustration}a and the motion model of Eq.~(\ref{eq_motion_heading})-(\ref{eq_motion_ypos}). Since the heading, and the positions along the x-axis and y-axis at time $k+1$, depend only on the corresponding values at time $k$, we use a Markov chain model to describe the motion dynamics of each pedestrian. We then use the properties of the corresponding Markov chain to mathematically derive the probability of crossing a given link by a single pedestrian and show its dependency on the speeds of each region. This is then followed by characterizing the probability that any number of people cross a given link. We note that the probability of crossing problem of interest to this section is considerably different from that of \cite{depatla2015occupancy}, since there are two regions with links in only one region.  As such, a new characterization and methodology is required as we develop in this section.

For the purpose of modeling, we discretize the work-space and assume that people can choose only discrete positions along x-axis, y-axis, and the heading direction.\footnote{This is only for the purpose of mathematical characterization. In practice, the positions and heading of the pedestrians are naturally not limited to these discrete values.} More specifically, $\theta(k) \in \mu^d = \{-\theta_{\textnormal{max}},\ -\theta_{\textnormal{max}}+\Delta \theta,\ \cdots,\  \theta_{\textnormal{max}}\} \cup \{\pi-\theta_{\textnormal{max}},\ \pi-\theta_{\textnormal{max}}+\Delta \theta,\ \cdots, \pi+\theta_{\textnormal{max}}\}$, $x(k) \in \{0,\ \Delta x,\ \cdots,\ B_1+B_2\}$, and $y(k) \in \{0,\ \Delta y,\ \cdots,\ L\}$, where $\Delta \theta $, $ \Delta x$, and $\Delta y $ denote the discretization step size for heading and position along x-axis and y-axis respectively. Let $N_\theta$ denote the number of discrete angles for the heading. Furthermore, let $N_1$ and $N_2$ represent the number of discrete positions along the x-axis in Region 1 and Region 2 respectively.

Let $\Theta(k)$ denote the random variable representing the heading of a pedestrian at time $k$. Let $\pi^\theta(k)$ represent the corresponding probability vector with the $i^\textnormal{th}$ element $(\pi^\theta(k))_i  = \textnormal{Prob}(\Theta(k) = (\mu^d)_i)$, where Prob(.) is the probability of the argument, and $(\mu^d)_i$ denotes the $i^\textnormal{th}$ element of the set $\mu^d$. Then from Eq.~(\ref{eq_motion_heading}), we have the following Markov chain for the heading $\Theta(k)$:
\begin{equation}
\pi^\theta(k+1) = \pi^\theta(k) P^\Theta ,
\end{equation}
where $P^\Theta$ is the probability transition matrix for the heading with $(P^\Theta)_{ij} = \textnormal{Prob}(\Theta(k+1)=(\mu^d)_j|\Theta(k)=(\mu^d)_i)$ and is given by $(P^\Theta)_{ij}  = \delta(i-j) \times p + \frac{1-p}{N_\theta} =  (P^\Theta)_{ji}$,
where $\delta(.)$ is the Dirac-delta function, $N_\theta = \textnormal{card}(\mu^d)$, and card(.) denotes the number of elements in the argument. Since the probability transition matrix $P^\Theta$ is symmetric, it is a doubly-stochastic matrix, which implies a uniform stationary distribution for $\Theta(k)$ \cite{meyer2000matrix}. This implies that the probability that a pedestrian heads in any given direction (in $\mu^d$) is the same asymptotically.

Let $X(k)$ denote the random variable representing the position of a pedestrian along the x-axis at time $k$. Similar to the heading direction, we can describe the dynamics of $X(k)$ using a Markov chain. Let $P^X$ denote the corresponding probability transition matrix for $X(k)$. We then have the following lemma for the stationary distribution of $X(k)$.
\begin{lemma}
The stationary distribution of $X(k)$ is given by $\gamma = [c_1 {e_1}\  c_2 {e_2}]$, where $c_1$, $c_2$ are constants, and ${e_1}$, ${e_2}$ are $N_1$ and $N_2$-dimensional row-vectors with all their elements as $1$.
\end{lemma}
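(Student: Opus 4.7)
The plan is to verify directly that the vector $\gamma = [c_1 e_1,\ c_2 e_2]$ satisfies the stationarity equation $\gamma P^X = \gamma$ for a suitable ratio $c_1/c_2$, and then invoke uniqueness of the stationary distribution for the ergodic joint chain $(X, \Theta)$ to conclude. I would begin by observing that the heading chain $\Theta$ evolves independently of $X$ and has a uniform stationary distribution on $\mu^d$ (already established from the doubly-stochastic property of $P^\Theta$), so at stationarity the one-step transition for $X$ can be obtained by averaging over a uniform $\theta$. Let $p_j$ (resp.\ $q_j$) denote the probability of an $x$-increment of $j\Delta x$ starting from Region~1 (resp.\ Region~2). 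The symmetry of $\mu^d$ under $\theta \mapsto \pi - \theta$, together with $\cos(\pi - \theta) = -\cos\theta$, immediately yields the key symmetry $p_{-j}=p_j$ and $q_{-j}=q_j$, which is what drives the entire argument.

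I would then check the balance equation $\gamma(x) = \sum_{x'} \gamma(x')\, P^X(x', x)$ in three regimes. First, for $x$ strictly interior to Region~1 (far enough from $0$, $B_1$, $B$ that only in-region neighbors can reach it), every contributing $x'$ carries weight $c_1$, and the sum collapses to $c_1 \sum_j p_j = c_1$; the Region~2 interior is identical with $c_2$ and $q_j$. Second, near the outer reflecting boundaries $x=0$ and $x=B$, a standard random-walk reflection argument pairs each overshoot step with its mirror image, so the uniform-within-region measure remains invariant on each side. In both of these regimes the in-region symmetry of the increments is all that is needed.

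The third regime, the internal boundary $x = B_1$, is the main obstacle and the only place where the two constants actually differ. A forward step from $x' < B_1$ in Region~1 can land above $B_1$ with step size $v_1 \delta t \cos\theta$, while a backward step from $x' \geq B_1$ in Region~2 uses the different size $v_2 \delta t \cos\theta$; detailed balance fails across $B_1$, so only the weaker global balance can be used, and I would have to do careful book-keeping of how overshoot transitions from one region are matched against undershoot returns from the other. Writing the balance equation for each $x$ in this boundary layer on either side of $B_1$ and using $p_{-j}=p_j$, $q_{-j}=q_j$, the conditions reduce to a single flux-matching relation of the form $c_1 v_1 = c_2 v_2$ (up to a discretization correction), which pins down $c_1/c_2$ and makes $\gamma$ stationary everywhere. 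Uniqueness of the stationary distribution for the irreducible, aperiodic joint chain $(X,\Theta)$ then forces the marginal on $X$ to have the claimed piecewise-constant form.
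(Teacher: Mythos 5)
Your overall strategy (guess the piecewise-constant form, verify global balance, invoke uniqueness) is legitimate and genuinely different from the paper's route, which instead partitions $P^X$ into blocks and applies Meyer's stochastic complementation theorem: the stationary distribution is $[k_1\gamma_1\ k_2\gamma_2]$ with $\gamma_1,\gamma_2$ the stationary distributions of $S_{11}=P_{11}+P_{12}(I-P_{22})^{-1}P_{21}$ and $S_{22}=P_{22}+P_{21}(I-P_{11})^{-1}P_{12}$, and the within-region symmetry $P_{11}=P_{11}^T$, $P_{22}=P_{22}^T$ plus the structure of $P_{12},P_{21}$ make $S_{11},S_{22}$ symmetric, hence $\gamma_1,\gamma_2$ uniform. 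Your flux-matching conclusion $c_1v_1=c_2v_2$ is also consistent with the constants the paper later derives in Theorem~1.

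The gap is precisely at the step you flag as ``the main obstacle.'' You describe overshoot steps of size $v_i\,\delta t\cos\theta$ that can span a boundary layer of several sites on each side of $B_1$, and then assert that the resulting family of balance equations ``reduces to a single flux-matching relation.'' That reduction is the entire content of the lemma and it is not demonstrated; worse, for a genuinely multi-site-jump chain it is generically false. Using the within-region symmetry, the balance equation at a Region-1 site $x$ within reach of Region~2 collapses to $c_1\sum_{x'\in R_2}P(x,x') = c_2\sum_{x'\in R_2}P(x',x)$, i.e.\ one condition \emph{per boundary-layer site}, and the outflow and inflow sums are different functions of the distance from $x$ to $B_1$ (they involve $v_1$ and $v_2$ respectively), so a single ratio $c_1/c_2$ cannot satisfy them all in general. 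What rescues the lemma in the paper is the explicit assumption that $\Delta x$ is chosen so that $q\Delta x$ is reachable from $r\Delta x$ in one step iff $|q-r|\le 1$ (Eq.~(\ref{eq_prop2})), which shrinks the boundary layer to a single site on each side and leaves exactly one condition, $c_1 P(N_1,N_1{+}1)=c_2 P(N_1{+}1,N_1)$. You need to either invoke that adjacency assumption explicitly or restrict the chain in an equivalent way; without it your book-keeping at $B_1$ does not close.
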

\begin{proof}
Let $P^X$ be partitioned as $P^X = \begin{bmatrix}
P_{11}  & P_{12}   \\
P_{21}  & P_{22}
\end{bmatrix},
$
where $P_{11}$ is a square matrix of dimension $N_1$. Further, $P_{ij}, \textnormal{\ for\ } \ i,j \in \{1,2\}$, specify the transition probabilities from positions in Region \textit{i} to positions in Region \textit{j}. The stationary distribution of the partitioned transition matrix $P^X$ is shown in \cite{meyer1989stochastic} to be $\gamma = [k_1 {\gamma_1}\  k_2 {\gamma_2}] $, where $k_1$ and $k_2$ are constants, and $\gamma_1$ and $\gamma_2$ are the stationary distribution vectors corresponding to the probability transition matrices, $S_{11}$ and $S_{22}$, defined as follows:
\begin{equation} \label{eq_stoc_comp}
\begin{split}
S_{11} & = P_{11} + P_{12}(I_{N_2}-P_{22})^{-1}P_{21} \\
S_{22} & = P_{22} + P_{21}(I_{N_1}-P_{11})^{-1}P_{12},
\end{split}
\end{equation}
where $I_{N_1}$ and $I_{N_2}$ are the identity matrices of dimensions $N_1$ and $N_2$ respectively.

Consider any two positions, $r \Delta x$ and $q \Delta x$, along the x-axis that are in the same region (i.e., with the same speed).  Then, based on \cite{depatla2015occupancy},
$
{\textnormal{Prob}(r \Delta x \rightarrow q \Delta x)} = {\textnormal{Prob} ( q \Delta x \rightarrow r \Delta x)},
$
where ${\textnormal{Prob}(r \Delta x \rightarrow q \Delta x)}$ denotes the probability of going from $q \Delta x$ to $r \Delta x$ in one time step. Since the speed of the pedestrians is the same within a region, we then have,
\begin{equation} \label{eq_prop1}
\begin{split}
P_{11} = &P_{11}^T \textnormal{\ and\ }
P_{22} = P_{22}^T.
\end{split}
\end{equation}
Furthermore, by choosing the step size $\Delta x$ such that $q \Delta x$ can be reached from $r \Delta x$ in one time step if and only if $|q-r| \leq 1 $, we have the following property for $P_{12}$ and $P_{21}$.
\begin{equation}\label{eq_prop2}
\begin{split}
(P_{12})_{ij} & \neq  0 \textnormal{\ iff\ } i=N_1, j= N_1+1 \\
(P_{21})_{ij} & \neq  0 \textnormal{\ iff\ } i=N_1+1, j= N_1.
\end{split}
\end{equation}

By substituting Eq.~(\ref{eq_prop1}) and (\ref{eq_prop2}) in (\ref{eq_stoc_comp}), we get, $S_{11}  = S_{11}^T \textnormal{\ and \ }
S_{22}  = S_{22}^T$.
Since $S_{11}$ and $S_{22}$ are symmetric, the corresponding stationary distributions are uniform, implying $\gamma_1 = \frac{e_1}{N_1}$, and $\gamma_2 = \frac{e_2}{N_2}$. Therefore, the stationary distribution of $P^X$ is $ \gamma = [c_1 {e_1}\  c_2{e_2}]$, where $c_1 = \frac{k_1}{N_1}$ and $c_2 = \frac{k_2}{N_2}$ are constants. This proves the lemma.
\end{proof}

Lemma $1$ states that the position of a pedestrian along the x-axis has a uniform asymptotic distribution within each region.

We next derive the probability that a pedestrian crosses a link, given that the pedestrian is in a region where there is a link (Region 1 in this case). We then use this conditional probability of crossing the link, along with Lemma $1$, to derive the overall probability of crossing. We first mathematically define crossing/blocking a link. We say that a pedestrian crosses/blocks a given link\footnote{In this paper, we consider WiFi links that are located parallel to the y-axis (see Fig.~\ref{fig_illustration}). However, the derivation of the probability of crossing can be extended to any general link configuration following a similar approach.} located at $X_i$ along the x-axis, at time $k+1$, if either $x(k+1)  \geq {X_i} \textnormal{\ and\ } x(k) \leq {X_i} $ or $x(k+1)  \leq {X_i} \textnormal{\ and\ } x(k) \geq {X_i} $.
With this definition for the cross/block, we then have the following lemma for the conditional probability of crossing a given link, given that the pedestrian is in the region where there is a link.
\begin{lemma}
Given that a person is in Region 1, the probability of crossing a given link in Region 1 is given by $p_{c|1} = \frac{v_1 \delta t\ \textnormal{sinc}(\theta_\textnormal{{max}})}{B_1}$, where $\textnormal{sinc}(\theta_\textnormal{{max}}) \triangleq \frac{\textnormal{sin}(\theta_\textnormal{{max}})}{\theta_\textnormal{{max}}}$ with $\theta_\textnormal{{max}}$ in radians.
\end{lemma}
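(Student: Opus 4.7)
The plan is to combine the stationary distribution of position from Lemma~1 with the (already established) uniform stationary distribution of the heading, and then compute the one-step crossing probability by averaging over these two uniform distributions.

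First I would set up the geometry. The link is parallel to the $y$-axis at some $X_i \in [0, B_1]$. By the definition of crossing already introduced, in one time step the pedestrian crosses the link exactly when $x(k)$ and $x(k+1)$ lie on opposite sides of $X_i$ (with equality allowed). From Eq.~(\ref{eq_motion_xpos}), the one-step $x$-displacement inside Region~1 is $d(\theta) = v_1 \delta t \cos(\theta)$.

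Next I would condition on the heading $\theta$ and use Lemma~1 to argue that, asymptotically, $X(k)$ is uniform on the $N_1$ discrete positions in Region~1. For a fixed $\theta$, the event $\{x(k) \le X_i \le x(k) + d(\theta)\}$ (when $d(\theta)>0$; symmetric argument when $d(\theta)<0$) corresponds to $x(k)$ lying in an interval of length $|d(\theta)|$, and hence has probability $|d(\theta)|/B_1$ under the uniform distribution, modulo a vanishing boundary correction of order $\Delta x / B_1$. I would note that the boundary-reflection cases (where $x(k+1)$ would overshoot one of the edges) are negligible under the same $O(\Delta x / B_1)$ approximation, so they do not affect the limiting expression.

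The last step is to average over $\theta$. Because $P^{\Theta}$ is doubly stochastic (noted earlier in the paper), $\Theta$ is asymptotically uniform on $\mu^d$, and hence over the continuum $\mu = [-\theta_{\max},\theta_{\max}] \cup [\pi-\theta_{\max},\pi+\theta_{\max}]$. By the symmetry $|\cos(\theta)| = |\cos(\pi-\theta)|$, the average of $|\cos(\theta)|$ over $\mu$ equals its average over $[-\theta_{\max},\theta_{\max}]$, which (assuming $\theta_{\max}\le \pi/2$ so that $\cos\theta\ge 0$ on that interval) is
\begin{equation*}
\frac{1}{2\theta_{\max}}\int_{-\theta_{\max}}^{\theta_{\max}} \cos(\theta)\,d\theta \;=\; \frac{\sin(\theta_{\max})}{\theta_{\max}} \;=\; \operatorname{sinc}(\theta_{\max}).
\end{equation*}
Multiplying by $v_1 \delta t / B_1$ yields $p_{c|1} = v_1 \delta t\, \operatorname{sinc}(\theta_{\max})/B_1$, as claimed.

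I expect the main obstacle to be bookkeeping rather than any conceptually hard step: specifically, making rigorous the independence between $X(k)$ and $\Theta(k)$ at stationarity (needed to factor the joint expectation), and carefully arguing that the boundary effects near $x=0$, $x=B_1$, and the Region~1/Region~2 interface contribute only $O(\Delta x/B_1)$ corrections so that the clean closed-form expression survives. The integration itself and the use of Lemma~1 are routine once that accounting is in place.
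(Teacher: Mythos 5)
Your proposal is correct and uses the same ingredients as the paper's proof — the uniform stationary distributions of position (Lemma 1) and heading, the same crossing definition, and the same vanishing-$\delta t$/boundary approximations — differing only in that you condition on the heading first and average $|{\cos\theta}|$ over $\mu$, whereas the paper conditions on position first and integrates $\min\{\theta_{\max},\cos^{-1}(\cdot)\}$ over the strip $[X_i-v_1\delta t,\,X_i+v_1\delta t]$; by Fubini these are the same computation and both yield $v_1\delta t\,\sin(\theta_{\max})/(B_1\theta_{\max})$. Your ordering is arguably the cleaner of the two since it avoids the inverse-cosine integral, but it is not a genuinely different route.
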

\begin{proof}
Consider a link located in Region 1 of Fig.~\ref{fig_illustration}a, whose x-coordinate is $X_i$. 
$X_i$, for instance, can represent $X_1$ or $X_2$ of Fig.~\ref{fig_illustration}a. Let the position of the person at time $k$ be $x(k) \leq X_i$. The person crosses the link at time $k+1$, if he/she chooses a direction $\theta(k)$ at time $k$ such that $
x(k)+ v_1\delta t \textnormal{cos}(\theta(k)) \geq X_i, \textnormal{\ which results in \ } |\theta(k)| \leq \textnormal{cos}^{-1}\Big(\frac{X_i - x(k)}{v_1 \delta t}\Big)
$, where $|.|$ is the absolute value of the argument. Since $|\theta(k)| \leq \theta_\textnormal{max}$, in order to cross the link, the heading direction should be as follows:
\begin{equation}
|\theta(k)| \leq \textnormal{min}\Big\{\theta_\textnormal{max},\ \textnormal{cos}^{-1}\Big(\frac{X_i - x(k)}{v_1 \delta t}\Big)\Big\}.
\end{equation}
Since the heading direction is uniformly distributed over $\mu^d$, the probability that a person at $x(k)$ crosses the link in Region 1 at time $k+1$, $p_{c|1}^{x(k)}$, is given by,
\begin{equation}\label{eq_prob_cross_given_pos}
p_{c|1}^{x(k)}=\frac{\textnormal{min}\Big\{\theta_\textnormal{max},\ \textnormal{cos}^{-1}\Big(\frac{X_i - x(k)}{v_1 \delta t}\Big)\Big\}}{2\theta_\textnormal{max}},\ \textnormal{for\ } x(k) \leq X_i.
\end{equation}
By symmetry, it can be seen that $p_{c|1}^{x(k)}$, for $x(k) \geq X_i$, is given by,
\begin{equation}\label{eq_prob_cross_given_pos2}
\begin{split}
p_{c|1}^{x(k)}=\frac{\textnormal{min}\Big\{\pi-\theta_\textnormal{max},\ \pi-\textnormal{cos}^{-1}\Big(\frac{x(k) - X_i}{v_1 \delta t}\Big)\Big\}}{2\theta_\textnormal{max}},\  \\ 
\end{split}
\end{equation}
The probability of crossing the link given the person is in Region 1, $p_{c|1}$, is then obtained by summing over all the positions in Region 1 from which a cross can occur:
\begin{equation}\label{eq_prob_cross_given_region}
p_{c|1} = \sum_{x(k)=X_i-v_1\delta t}^{X_i+v_1\delta t} \frac{\Delta x}{B_1} p_{c|1}^{x(k)},
\end{equation}
where $\frac{\Delta x}{B_1}$ is the probability that a pedestrian is located at any given position in Region 1.
By substituting Eq.~(\ref{eq_prob_cross_given_pos}) and (\ref{eq_prob_cross_given_pos2}) in (\ref{eq_prob_cross_given_region}) and letting $\delta t \rightarrow 0$, we get,
\begin{equation}\label{eq_cond_prob_cross}
p_{c|1}  =
\frac{1}{2B_1 \theta_\textnormal{max}}  \int_{X_i-v_1\delta t}^{X_i+v_1\delta t}\hspace{-2mm} \textnormal{min}\Big\{\theta_\textnormal{max},\ \textnormal{cos}^{-1}\Big(\Big|\frac{X_i - x(k)}{v_1 \delta t}\Big |\Big)\Big\} dx.
\end{equation}
By simplifying Eq.~(\ref{eq_cond_prob_cross}) further, we get
\begin{equation}\label{eq_cond_prob_cross_final_form}
p_{c|1} = \frac{v_1 \delta t \textnormal{sin}(\theta_\textnormal{max})}{B_1\theta_\textnormal{max}},
\end{equation}
which proves the lemma.
\end{proof}

By using Lemma $1$ and Lemma $2$, we then have the following theorem for the probability of crossing a given link by a single pedestrian.
\begin{theorem}
The probability of crossing a given link by a single pedestrian, $p_{c,\textnormal{single person}}$, walking with the speed $v_1$ in Region 1 and speed $v_2$ in Region 2, is given by, $p_{c, \textnormal{single person}} = \frac{v_1 v_2 \delta t\textnormal{sinc}(\theta_\textnormal{max})}{v_1B_2+v_2B_1} $.
\end{theorem}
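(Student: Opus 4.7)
The plan is to obtain $p_{c,\textnormal{single person}}$ by conditioning on which region contains the pedestrian and then to determine the missing two-region marginal from a steady-state flux balance at the interface $x=B_1$. First I would apply the law of total probability, writing $p_{c,\textnormal{single person}} = \pi_1\, p_{c|1} + \pi_2\, p_{c|2}$, where $\pi_i$ denotes the stationary probability of being in Region~$i$. Since the link of interest sits strictly inside Region~1 and a pedestrian in Region~2 cannot reach it in one step once $\delta t$ is small (the step length there is of order $v_2\delta t$), we have $p_{c|2}=0$, and the expression collapses to $\pi_1\, p_{c|1}$. Lemma~2 already supplies $p_{c|1} = v_1\delta t\,\textnormal{sinc}(\theta_{\textnormal{max}})/B_1$, so everything reduces to pinning down $\pi_1$.

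Next, I would read off the shape of the stationary distribution from Lemma~1: it is uniform within each region, with unknown scalar weights $c_1,c_2$, giving $\pi_1 = c_1 N_1 = c_1 B_1/\Delta x$ and $\pi_2 = c_2 B_2/\Delta x$. Lemma~1 alone does not fix the ratio $c_1/c_2$, so I would impose a detailed-balance condition across the inter-region interface $x=B_1$: in steady state, the per-step probability of a $1\!\to\! 2$ boundary crossing equals that of a $2\!\to\! 1$ crossing. The relevant one-directional conditional probabilities $q_{1\to 2}$ and $q_{2\to 1}$ can be computed by rerunning the derivation of Eq.~(\ref{eq_prob_cross_given_pos})--(\ref{eq_cond_prob_cross_final_form}) with $X_i=B_1$ and keeping only the integrand on the side of the boundary the pedestrian is actually starting from; by symmetry this is exactly half of the two-sided expression in Lemma~2, yielding $q_{1\to 2} = v_1\delta t\,\textnormal{sinc}(\theta_{\textnormal{max}})/(2B_1)$ and $q_{2\to 1} = v_2\delta t\,\textnormal{sinc}(\theta_{\textnormal{max}})/(2B_2)$. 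The balance equation $\pi_1 q_{1\to 2}=\pi_2 q_{2\to 1}$, combined with $\pi_1+\pi_2=1$, then gives $\pi_1 = v_2 B_1/(v_1 B_2+v_2 B_1)$.

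Finally I would substitute back: $p_{c,\textnormal{single person}} = \pi_1 \, p_{c|1} = \frac{v_2 B_1}{v_1 B_2+v_2 B_1}\cdot\frac{v_1\delta t\,\textnormal{sinc}(\theta_{\textnormal{max}})}{B_1} = \frac{v_1 v_2 \delta t\,\textnormal{sinc}(\theta_{\textnormal{max}})}{v_1 B_2+v_2 B_1}$, which is exactly the claim. The hard part will not be the final algebra but the flux-balance step: one must argue carefully that the Lemma~2 technology still applies at the inter-region interface even though $v_1\neq v_2$ across the boundary (so the step length is piecewise constant), and that restricting the integral in Eq.~(\ref{eq_cond_prob_cross}) to pedestrians approaching from within the region correctly yields $q_{1\to 2}$ and, by the mirror argument, $q_{2\to 1}$. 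Once that justification is in place, the proof closes by direct substitution.
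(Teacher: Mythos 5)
Your proposal is correct and follows essentially the same route as the paper: decompose $p_{c,\textnormal{single person}}=\pi_1\,p_{c|1}$ using Lemmas 1 and 2, then pin down $\pi_1$ from the one-directional interface transition probabilities $p_{c|1}/2$ and $p_{c|2}/2$. The only cosmetic difference is that the paper obtains the balance equation $\pi_1\,p_{c|1}/2=\pi_2\,p_{c|2}/2$ by invoking the pseudo-aggregation (stochastic complementation) of the Markov chain into a two-state chain, whereas you derive the identical equation directly as a stationary flux balance across the cut at $x=B_1$, which is a valid and arguably more self-contained justification.
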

\begin{proof}
The probability of crossing a given link in Region 1 by a single pedestrian is given by,
\begin{equation}
p_{c, \textnormal{single person}}  = c_1\ p_{c|1},
\end{equation}
where $c_1$, defined in Lemma $1$, denotes the probability of the pedestrian being in Region 1, and $p_{c|1}$ is the conditional probability that the pedestrian crosses the given link in Region 1, if he/she is in Region 1.

To find the probability $c_1$, we use the pseudo-aggregation properties of the underlying Markov chain \cite{rubino2014markov}. More specifically, for the transition matrix $P^X$, defined in Lemma $1$, with a stationary distribution of the form $[c_1e_1\ c_2e_2] $, the constants $c_1$ and $c_2$ are given by the stationary distribution of the probability transition matrix $P$, as we show next.
\begin{equation}\label{eq_aggregated_matrix}
P= \begin{bmatrix}
p_{11}\ p_{12} \\
p_{21}\ p_{22}
\end{bmatrix},
\end{equation}
where
$p_{ij}=\frac{\mathbf{1}^T P_{ij}\mathbf{1}}{N_i}, \textnormal{\ for\ } i,j \in \{1,2\}$,
and $\mathbf{1}$ denotes a column vector whose elements are all $1$. We can then prove that the stationary distribution of $P$ in Eq.~(\ref{eq_aggregated_matrix}) is $(c_1,c_2)$ \cite{rubino2014markov}.

It can be seen that $p_{12}$, is the probability of crossing from Region 1 to Region 2. From Lemma $2$, we have,
\begin{equation}\label{eq_aggre_matrix_probs}
\begin{split}
p_{12} & = \frac{p_{c|1}}{2} \textnormal{\ and \ }
p_{21}  = \frac{p_{c|2}}{2}.
\end{split}
\end{equation}
By substituting Eq.~(\ref{eq_aggre_matrix_probs}) in (\ref{eq_aggregated_matrix}) and solving for the stationary distribution of $P$, we have,
$c_1  = \frac{v_2 B_1}{v_1 B_2 + v_2 B_1}, \textnormal{\ and\ }
c_2  = \frac{v_1 B_2}{v_1 B_2 + v_2 B_1}$.
The probability of crossing a given link in Region 1 by a single pedestrian can then be characterized as follows,
\begin{equation}\label{eq_prob_cross_1per}
p_{c, \textnormal{single person}}  = c_1\ p_{c|1} = \frac{v_1 v_2 \delta t\  \textnormal{sinc}(\theta_\textnormal{max})}{v_1B_2+v_2B_1}.
\end{equation}
This proves the theorem.
\end{proof}
\begin{remark}
\normalfont Note that if there was a link in Region 2, the probability of a single pedestrian crossing it would have been the same.  This can be seen from the expression for $p_{c, \textnormal{single person}}$ by interchanging $B_1$ with $B_2$ and $v_1$ with $v_2$. Further, note that the probability of crossing is independent of the location of the link within Region 1.
\end{remark}
Since there are $N$ people walking in the area, we next characterize the probability that any number of people cross a given link, $p_c(v_1,v_2)$, assuming that pedestrians' motions are independent. We then have the following for the closed case:
\begin{equation}\label{eq_prob_cross_nppl}
p_c(v_1,v_2) = 1 - (1-p_\textnormal{c, single person})^N.
\end{equation}
From Eq.~(\ref{eq_prob_cross_nppl}), it can be seen that the probability of any number of pedestrians crossing the link is a function of the speeds of the pedestrians in both regions. Furthermore, from Remark 1, we can see that the probability of crossing a link in Region 2, if there was one in Region 2, will not provide any additional information in terms of the speeds in Region 1 and 2, as it has the same exact function form as the probability of crossing a link in Region 1. In other words, it would not have been possible to estimate the speeds by utilizing two links, one in Region 1 and one in Region 2.
\subsection{Characterizing the Cross-correlation}\label{sec_cross_corr}
As shown in Fig.~\ref{fig_illustration}, a pair of WiFi links in one region (Region 1) make wireless measurements as people walk in two adjacent regions. We next characterize the cross-correlation between these two links and show how it carries vital information on the speeds. 

Consider the closed area of Fig. \ref{fig_illustration}a. We say an event $E_l$ happens at a link, if $l >0$ number of people block the link. Let $Y_1(k)$ and $Y_2(k)$ denote the event sequences corresponding to Link 1 and Link 2, as defined below:
\begin{equation*}
Y_i(k) = \begin{cases}
 l &\textnormal{\ if $E_l$ happens at time\ } k \\
 0  &\textnormal{\ otherwise}
\end{cases},
\textnormal{\ for \ }  i \in \{1,2\}.
\end{equation*}
In this section, we show that the cross-correlation between the event sequences of the two links carry key information about the speeds of the pedestrians. We show how to estimate the event sequences from real data in the next section.

The cross-correlation between the two event sequences, $Y_1(k)$ and $Y_2(k)$, is given by
\begin{equation}\label{eq_cross_corr}
R_{Y_1Y_2}(\tau,v_1,v_2)=\frac{\textnormal{Cov}\Big(Y_1(k), Y_2(k+\tau)\Big)}{\sqrt{\textnormal{Var}\Big(Y_1(k)\Big)\textnormal{Var}\Big(Y_2(k+\tau)\Big)}},
\end{equation}
where $\textnormal{Cov(. , .)}$, and $\textnormal{Var(.)}$ denote the covariance and variance of the arguments, respectively. Since the pedestrians walk independent of each other, we have,
\begin{equation}\label{eq_event_superpos_x}
Y_i(k) = \sum_{j=1}^N Y_i^j(k), \textnormal{\ for\ } i \in \{1,2\},
\end{equation}
where
$Y_i^j(k) = 1$  if $j^{\textnormal{th}}$ person blocks Link $i$ at time $k$, 0 otherwise for $i \in \{1,2\}$.
Since we assume independent motion for the pedestrians, it can be easily confirmed that the numerator and the denominator of Eq.~(\ref{eq_cross_corr}) are proportional to $N$, and therefore the cross-correlation becomes independent of $N$. This can be seen by substituting Eq.~(\ref{eq_event_superpos_x}) in (\ref{eq_cross_corr}), and further simplifications, which results in
\begin{equation}
\begin{split}
 R_{Y_1Y_2}&(\tau,v_1,v_2) = \\&
\frac{\textnormal{Prob}(Y_2^j(k+\tau) =1|Y_1^j(k)=1) - p_{c,\textnormal{single person}}}{ 1- p_{c,\textnormal{single person}}}, \\
& \textnormal{\ for\  any } j \in  \{1,2,\cdots,N\}.
\label{eq_cross_corr_final_form}
\end{split}
\end{equation}
It can be seen that Eq.~(\ref{eq_cross_corr_final_form}) is independent of $N$. For the case of open area, since the number of people in the area changes with time, $N$ should be considered a random variable. Then, by assuming that peoples' arrival into the area follow a Poisson process, by substituting Eq. (\ref{eq_event_superpos_x}) in (\ref{eq_cross_corr}), and after further simplification, we get an expression similar to Eq. (\ref{eq_cross_corr_final_form}), which is a function of only the motion dynamics of a single pedestrian.    

While it is considerably challenging to derive a closed-form expression for the cross-correlation, the dependency on the speeds can be easily seen. For instance, the first term in the numerator of Eq.~(\ref{eq_cross_corr_final_form}),$\textnormal{\ Prob}(Y_2^j(k+\tau) =1|Y_1^j(k)=1)$, is the probability that the $j^\textnormal{th}$ person is at Link 2 at time $k+\tau$, given that he/she is at Link 1 at time $k$. Clearly this depends on the speeds at which the $j^\textnormal{th}$ person is walking in both regions. Hence the cross-correlation in Eq.~(\ref{eq_cross_corr_final_form}) contains information about the speeds.  However, given the vicinity of the two links, and by considering all the possible motion patterns of the people, it can be easily seen that the cross-correlation carries more information on the speed of Region 1, as compared to Region 2. As such, in the next part, we utilize it for the estimation of the speed in Region 1, as we shall see.
\end{section}
\subsection{Speed Estimation for the Closed Area} \label{sec_speed_est_closed_area}
As shown in Sections \ref{sec_prob_cross} and \ref{sec_cross_corr}, the probability of crossing a WiFi link, and the cross-correlation between the two WiFi links, carry key information about the speeds of the pedestrians in the two adjacent regions. Equations (\ref{eq_prob_cross_nppl}) and (\ref{eq_cross_corr_final_form}) further model these relationships, which we then use to estimate the speeds of the pedestrians in the two regions.

Let $Y_1^{\textnormal{exp}}$ and $Y_2^{\textnormal{exp}}$ denote the event sequences, corresponding to the two WiFi links, obtained from an experiment. Let $R_{Y_1,Y_2}^{\textnormal{exp}}(\tau)$ denote the cross-correlation between the event sequences $Y_1^{\textnormal{exp}}$ and $Y_2^{\textnormal{exp}}$, and let $p_{c,1}^{\textnormal{exp}}$, $p_{c,2}^{\textnormal{exp}}$ denote the probability of crossing Link 1 and Link 2 respectively. The probability of crossing can be computed from the event sequences as follows:
\begin{equation}
\begin{split}
p_{c,i}^{\textnormal{exp}}= \frac{\delta t}{T} \times {\textnormal{Number of events in\ } Y_i^{\textnormal{exp}}}, \textnormal{\ for\ } i \in \{1,2\},
\end{split}
\end{equation}
where $T$ denotes the total time for which the data is collected, and $\delta t$ is the discretization step size.

Since the cross-correlation of Eq.~(\ref{eq_cross_corr_final_form}) is independent of the total number of people, $N$, we first estimate $v_1$ from the cross-correlation without assuming the knowledge of $N$. Then, given $N$ and an estimate of the speed in Region 1, i.e., $\widehat{v_1}$, we use the probability of crossing in Eq.~(\ref{eq_prob_cross_nppl}) to estimate the speed in Region 2. More specifically, we have,
\begin{equation}\label{eq_speed_est}
\begin{split}
\widehat{v_1} = & \min_{v_1,v_2} \sum_{\tau = 0}^{\tau = T} \Big(R_{Y_1,Y_2}^{\textnormal{exp}}(\tau) - R_{Y_1,Y_2}(\tau,v_1,v_2)\Big)^2 \\
\widehat{v_2} = & \min_{v_2} \Big( p_c^{\textnormal{exp}} - p_c({\widehat{v_1},v_2})\Big)^2,
\end{split}
\end{equation}
where $p_c^{\textnormal{exp}} = \frac{p_{c,1}^{\textnormal{exp}}+p_{c,2}^{\textnormal{exp}}}{2}$. In other words, given that each link will have the same probability of cross, we average the experimental probability of crossing of the two links in order to reduce the impact of errors. We further only estimate $v_1$ from the cross-correlation, since it is heavily dependent on $v_1$, as discussed earlier. As for evaluating $R_{Y_1,Y_2}(\tau,v_1,v_2)$, we utilize simulations, which are low complexity since the cross-correlation is independent of $N$ and can thus be simulated  for only one person. More specifically, for any given speed pair, we simulate one person walking in the area and generate the event sequences corresponding to the two links in the area. $R_{Y_1,Y_2}(\tau,v_1,v_2)$ is then obtained by computing the cross-correlation between the two event sequences.  Finally, the parameter $\theta_{\textnormal{max}} $ in $p_c(v_1, v_2)$ is assumed to be $45\degree$ in all our results of the closed areas in the next section since they involve long hallways. We note that our results are not very sensitive to this choice of $\theta_{\textnormal{max}}$, and $ \theta_{\textnormal{max}}$ for a wide range of angles near $45\degree$ will lend similar results as we shall see in the next section.
\subsection{Speed Estimation for the Open Area}
Consider the open area scenario shown in Fig.~\ref{fig_illustration}b. The number of people in the area can change during the sensing period and should be considered a random variable. However, as explained in Section III-C, since the cross-correlation is not a function of the number of people, Eq. (\ref{eq_speed_est}) can still be used to estimate the speed $v_1$. We next show how to characterize the probability of crossing for the open area in order to estimate $v_2$.

Let $\lambda$ denote the rate of arrival of people into the area (from both regions). We assume that the rate of departure of people from the area is also $\lambda$. This will be the case as long as the average number of people, $N_\textnormal{avg}$, averaged over a small time interval, does not change significantly with time. Furthermore, we assume that each person mainly has a forward flow, i.e., she/he mainly walks in a forward direction and rarely turns back. The probability of crossing a link is then related to the rate of arrival as follows:
\begin{equation}
\begin{split}
p_c(v_1,v_2) = &\textnormal{Number of events in time interval [0 T]}  \times \frac{\delta t}{T} \\
= & \lambda \delta t, \label{eq_rate}
\end{split}
\end{equation}
To relate $p_c$ to the average speed of people in the two regions, we next use a theory from queuing systems.

Consider the overall area as a queuing system in which every person is serviced until the person exits. Then, the Little's law of queuing theory \cite{little2011or} relates the average number of people being serviced, $N_\textnormal{avg}$,  to the average time spent in the area by a person, $T_\textnormal{avg}$, and the rate of arrival, $\lambda$,  as follows:
\begin{equation}\label{eq_avg_number}
N_\textnormal{avg} = \lambda T_\textnormal{avg}.
\end{equation}
Since we assume that people mainly walk in a forward direction, the average time spent in the corridor can be approximated as follows:\footnote{We note that a better approximation of the average time can be calculated by considering the motion model of people in Section \ref{sec_prob_formulation}, as part of our future work.}
\begin{equation}\label{eq_avg_time}
T_\textnormal{avg} \approx \frac{B_1}{v_1}+\frac{B_2}{v_2}.
\end{equation}
From Eq. (\ref{eq_rate}), (\ref{eq_avg_number}), and (\ref{eq_avg_time}), we can characterize the probability of crossing in terms of the speeds of people in the two regions as follows:
\begin{equation}\label{eq_prob_cross}
p_c(v_1,v_2) \approx \frac{N_\textnormal{avg} v_1 v_2}{v_1B_2+v_2B_1}  \delta t.
\end{equation}
$v_1$ and $v_2$ can then be estimated by substituting Eq. (\ref{eq_prob_cross}) in Eq. (\ref{eq_speed_est}).
\begin{remark}
 Consider the expression derived for $p_c$ of Eq. (\ref{eq_prob_cross_nppl}), for the closed case.  If we assume that the probability of simultaneous crosses are negligible, we can approximate Eq. (\ref{eq_prob_cross_nppl}) with $\frac{N v_1 v_2 \delta t\textnormal{sinc}(\theta_\textnormal{max})}{v_1B_2+v_2B_1} $.  For the open case, Eq. (\ref{eq_avg_time}) becomes a better approximation if $\theta_\textnormal{max}$ is small. Then, by approximating
 $\theta_\textnormal{max} \approx 0$, we then have the probability of crossing of the closed
 case approximated by $\frac{N v_1 v_2 \delta t}{v_1B_2+v_2B_1} $, which is similar to the
 expression derived for the open case in Eq. (\ref{eq_prob_cross}). As mentioned earlier, Eq. (\ref{eq_avg_time}) can be more rigorously related to $\theta_\textnormal{max}$ as part of future work.
\end{remark}
\begin{section}{Experimental Results}\label{sec_exp_result}
In this section, we validate the proposed methodology of Section \ref{sec_speed_estimation} with several experiments. We start with a number of experiments in closed areas in both indoor and outdoor, where different number of people walk in two adjacent regions, with a variety of possible speeds per region, and show that our framework can estimate the speeds with a good accuracy. We then run experiments in a museum-style setting, where two exhibitions showcase two very different types of displays. Our approach can then accurately estimate the visitor speeds in both exhibits, and thus deduce which exhibit is more popular. We finally test our framework in an open aisle of a retail store, Costco, and estimate the rate of arrival and speed of people in the aisle, thus inferring the interest of people in the products of the aisle. We next start by explaining the experimental setup and the initial data processing.
\begin{figure}[t]
\centering
\includegraphics[width=0.95\linewidth]{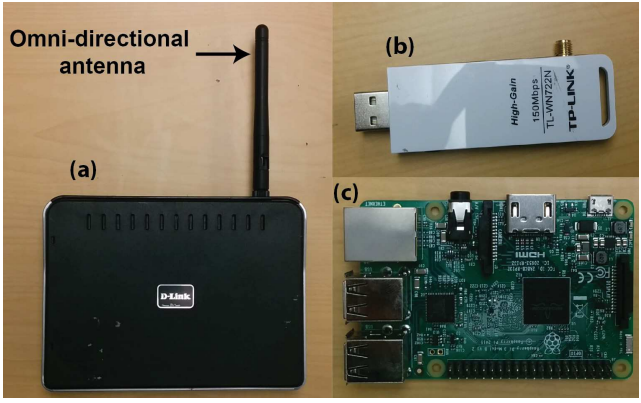}
\caption{(a) D-Link WBR 1310 wireless router along with an omni-directional antenna, (b) the TP-Link wireless N150 WLAN card, (c) Raspberry Pi board used to control the data collection process and synchronize the two WiFi links.}\label{fig_hardware}
\end{figure}
\subsection{Experiment Setup}\label{sec_exp_setup}
As shown in Fig.~\ref{fig_illustration}, our experiments consist of pedestrians walking in two adjacent regions with different possible speeds in each region. A pair of WiFi links located in one region make wireless RSSI measurements to estimate the speeds of pedestrians in both regions. 
We use a D-link WBR-1310 WiFi router that operates in 802.11g mode as a Tx node and a TP-Link Wireless N150 WLAN card configured to operate in 802.11g mode as a Rx node for each link. In order to receive and store the wireless measurements, the WLAN cards need to be interfaced with a computer via a USB connection. We use a portable credit card-sized computer, Raspberry Pi (RPI), for this purpose. Furthermore, to transmit and receive the wireless signals, we use omni-directional antennas at both the router and the WLAN card of each link. Fig.~\ref{fig_hardware} shows the WiFi router, WLAN card, RPI, and the omni-directional antenna used in our experiments.

In order to derive the cross-correlation from the experimental data, the receivers of the two WiFi links need to be synchronized in time. To achieve this, we interface the Rx nodes of both WiFi links to the same RPI and program them to receive the wireless signals at the same time instants from their corresponding transmitters. The data is collected at a rate of 20 samples/second at each receiver of the WiFi link. Since the two WiFi links are located at a close distance (of the order of meters), each link is configured to operate in a different sub-channel of the 2.4 GHz wireless band to avoid any interference. Specifically, we use sub-channel 1, which operates at $2.41$ GHz for one link, and sub-channel 11, which operates at $2.47 $ GHz, for the other link. This separates the two links by the widest frequency margin in the 2.4 GHz WiFi band. 
Fig.~\ref{fig_experiment_scenario} (left) and Fig.~\ref{fig_experiment_scenario} (right) show the resulting experimental setup in an outdoor and an indoor area respectively.
\subsubsection{Experimental Speeds}
As shown in Fig.~\ref{fig_illustration}, our experiments involve pedestrians walking at various speeds in each of the two adjacent regions. In our experiments of Section \ref{sec_exp_result_sub}, we ask people to walk casually throughout the area containing two regions, maintaining a specific speed $v_1$ in Region 1 and $v_2$ in Region 2. We consider three speeds, $0.3\ m/s$, $0.8\ m/s$, and $1.6\ m/s$, for each region, which results in $9$ possible combinations for the speeds in the two regions.  To help people walk at the correct speeds, we make use of a mobile application called ``Frequency Sound Generator" which generates an audible tone every second. Each person then listens to this application on his/her mobile and takes a step of length $v_1$, while walking in Region 1 and a step of length $v_2$, while walking in Region 2, every time he/she hears the tone. This ensures correct speeds for people walking in each region. In order to take steps of length $v_1$ in Region 1 and $v_2$ in Region 2, we have people practice their step lengths to match $v_1$ and $v_2$ prior to the experiments. This procedure is employed only to ensure an accurate ground-truth of speeds in each region, which is used in assessing the performance of our approach. In our museum-type experiments and the experiments in the aisle of Costco, the speeds of people are naturally determined by their interests in each region, and as such there is no control over peoples' speeds in those experiments. 
\subsection{Initial Data Processing}\label{sec_init_data_process}
As shown in Section \ref{sec_speed_estimation}, our framework is based on the event sequences of a pair of WiFi links located in one region, with the events corresponding to people crossing a WiFi link. Therefore, we need to first extract the event sequences of each WiFi link from the corresponding RSSI measurements. We next describe this process.
\begin{figure*}[t]
\centering
\begin{minipage}{1.0\textwidth}
\begin{center}
\includegraphics[width=1\linewidth]{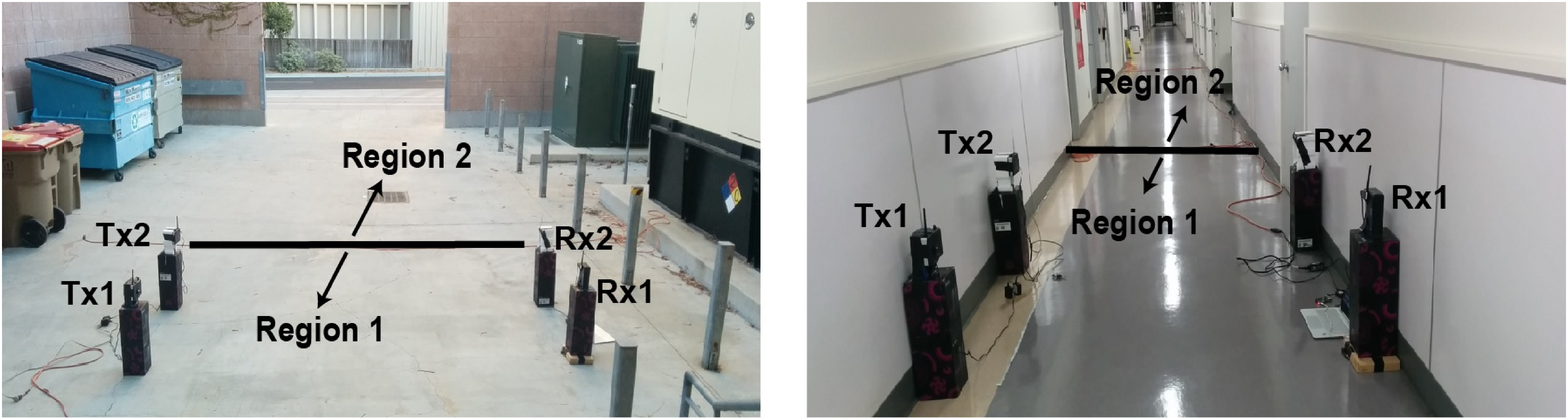}
\caption{(left) The outdoor area of interest and (right) the indoor area of interest. Each area is divided into two regions, Region 1 and Region 2, as separated by the black line in both outdoor and indoor cases. The dimensions of the outdoor area are $L=4.26\  m $, $B_1= 5.5\ m$, $B_2= 8.8\ m $, and that of indoor area are $L= 2.25\ m $, $B_1=7\ m $, $B_2=13\ m$ (see Fig.~\ref{fig_illustration} for definitions of $B_1$ and $B_2$). Two WiFi links, each consisting of a transmitter and a receiver are located in Region 1.}
\label{fig_experiment_scenario}
\end{center}
\end{minipage}
\end{figure*}

To convert the RSSI measurements into an event sequence, we first identify all the dips in the RSSI measurements and the associated times at which the dips occur. Let  $k_i, \textnormal{\ for\ }i \in \{1,2,\cdots,I\}$, denote these times, and  let $Z(k_i)$ denote the corresponding RSSI measurement at time $k_i$. 
The event sequence, $Y_{i}^{\textnormal{exp}}(k)$, is then obtained from the RSSI measurements as follows:
\begin{equation*}\label{eq_thresholding}
\begin{split}
Y_{i}^{\textnormal{exp}}(k)=
\begin{cases}
l & \textnormal{ if } k = k_i \textnormal{\ and\ } Z(k_i) \textnormal{\ is closest to } R_{l,i} \\
0\ & \textnormal{otherwise}
\end{cases}, \\
\textnormal{\ for\ } i \in \{1,2\},
\end{split}
\end{equation*}
where $R_{l,i}$ denotes the RSSI measurement of the $i^{\textnormal{th}}$ WiFi link when $l$ people simultaneously block the $i^{\textnormal{th}}$ link. We find the values of $R_{l,i}$ by performing a small calibration phase in which $l$ (up to 2) people simultaneously block the $i^{\textnormal{th}}$ WiFi link and the corresponding RSSI is measured.\footnote{We need to collect this only for small $l$ as the probability of $l$ people simultaneously blocking the LOS link is negligible for higher $l$.} Note that small variations in $R_{l,i}$ due to factors such as different dimensions of people crossing the WiFi link have a negligible impact on our results. For instance, we collect $R_{l,i}$ data for only $2$ people in the calibration phase, while a total of $10$ different people walk in each campus experiment.



\subsection{Experimental Validations and Discussions}\label{sec_exp_result_sub}
In this Section, we extensively validate our framework by estimating the speeds of people in two adjacent regions of an area using the aforementioned experimental setup.



\begin{figure*}[h!]
\centering
\begin{center}
\includegraphics[width=1\linewidth]{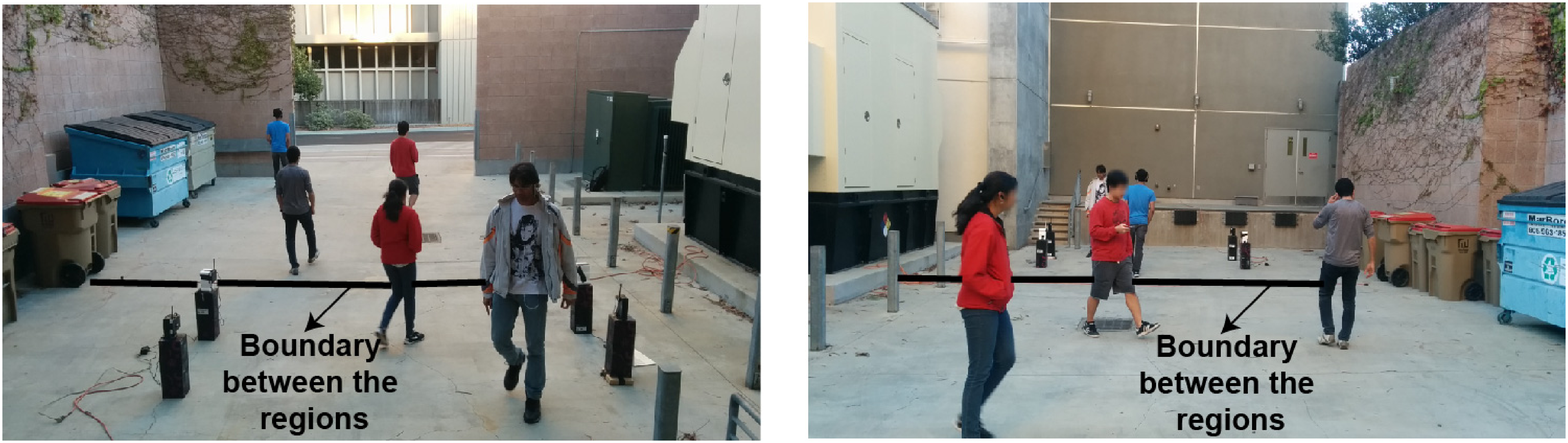}
\caption{The outdoor area of interest with two snapshots of people walking in the area. The black line separates the area into two regions. People move casually throughout the area with the given region-specific speed. A pair of WiFi links located in Region 1 makes wireless measurements to estimate the speed of people in both regions.}
\label{fig_outdoor_scenario}
\end{center}
\end{figure*}
\begin{table}[t]
\begin{center}
\renewcommand{\arraystretch}{1.1}%
\begin{tabular}{|C{2cm}|C{3cm}|}\hline
\shortstack{\\ \bf True speeds \\ ($v_1$, $v_2$)} & \shortstack{ \\ \bf Estimated speeds \\ ($\widehat{v_1}$, $\widehat{v_2}$) } \\ \hline
(0.8, 0.8) & (0.9, 0.9)   \\ \hline
(0.8, 0.3) & (0.8, 0.3)  \\ \hline
(0.8, 1.6) & (0.8, 2.3)  \\ \hline
(0.3, 0.8) & (0.4, 0.9)  \\ \hline
(0.3, 0.3) & (0.4, 0.4)  \\ \hline
(0.3, 1.6) & (0.3, 2.4) \\ \hline
(1.6, 0.8) & (1.7, 0.6)  \\ \hline
(1.6, 0.3) & (1.8, 0.5)  \\ \hline
(1.6, 1.6) & (1.9, 2)  \\ \hline
\end{tabular}
\caption{A sample performance of our speed estimation approach for Region 1 ($v_1$) and Region 2 ($v_2$) of the outdoor area of Fig.~\ref{fig_outdoor_scenario} and the case of $N=5$ people.}\label{tab_sample_result_outdoor}
\end{center}
\end{table}
Fig.~\ref{fig_experiment_scenario} (left) and (right) show the considered outdoor and indoor closed areas of interest respectively. Each area is divided into two regions, with a pair of WiFi links located in one of the regions. The dimensions of the outdoor area are 
$L=4.26\  m $, $B_1= 5.5\ m$, $B_2= 8.8\ m $, $X_1 =2.5\ m $,  $X_2 = 3.7\ m $, while the dimensions of the indoor area are $L= 2.25\ m $, $B_1=7\ m $, $B_2=13\ m $, $X_1 =2.5 \ m $, $X_2 =4\ m $ (see Fig.~\ref{fig_illustration}a). People are then asked to walk casually throughout the area, with a specific region-dependent speed.  Sample snapshots of people walking in the outdoor and indoor areas are shown in Fig.~\ref{fig_outdoor_scenario} and Fig.~\ref{fig_indoor_scenario} respectively.
\begin{table}[t]
\begin{center}
\renewcommand{\arraystretch}{1.1}%
\begin{tabular}{|C{2cm}|C{3cm}|}\hline
\shortstack{\\ \bf True speeds \\ ($v_1$, $v_2$)} & \shortstack{ \\ \bf Estimated speeds \\ ($\widehat{v_1}$, $\widehat{v_2}$)} \\ \hline
(0.8, 0.8)  & (0.9, 0.9)   \\ \hline
(0.8, 0.3)  & (1, 0.5)   \\ \hline
(0.8, 1.6)  &(0.9, 1.6)   \\ \hline
(0.3, 0.8) &(0.5, 0.9)   \\ \hline
(0.3, 0.3)  &(0.5, 0.3)   \\ \hline
(0.3, 1.6) &(0.4, 1.9)   \\ \hline
(1.6, 0.8)  &(1.9, 0.7)   \\ \hline
(1.6, 0.3)  &(1.7, 0.4)   \\ \hline
(1.6, 1.6)  &(1.9, 2.1)   \\ \hline
\end{tabular}
\caption{A sample performance of our speed estimation approach for Region 1 ($v_1$) and Region 2 ($v_2$) of the indoor area of Fig.~\ref{fig_indoor_scenario} and the case of $N=9$ people.}\label{tab_sample_result_indoor}
\end{center}
\end{table}
\begin{table*}
\begin{minipage}[b]{0.3\textwidth}
\renewcommand{\arraystretch}{1.2}%
\begin{tabular}{|c|c|c|c|}
\hline
\bf Speed   & $v_1$ & $v_2$ & $v_1$ or $v_2$ \\ \hline
\bf NMSE & 0.11 & 0.24 & 0.18   \\ \hline
\end{tabular}
\caption{NMSE of the estimation of speeds in each region as well as the overall NMSE of the speeds in any of the two regions.}\label{tab_nmse_all}
\end{minipage}
\hspace{0.09 in}
\begin{minipage}[b]{0.3\textwidth}
\renewcommand{\arraystretch}{1.2}%
\begin{tabular}{|c|c|c|c|}
\cline{2-4}
\multicolumn{1}{c|}{} & \multicolumn{3}{c|}{\bf NMSE} \\
\hline
\bf Scenario & $v_1$  & $v_2$ & $v_1$ or $v_2$   \\ \hline
\bf Outdoor & 0.09 & 0.16  & 0.12  \\ \hline
\bf Indoor & 0.14  & 0.33  & 0.23\\ \hline
\end{tabular}
\caption{NMSE of speed estimation for both indoor and outdoor.}\label{tab_nmse_region}	
\end{minipage}
\hspace{0.15 in}
\begin{minipage}[b]{0.33\textwidth}
\renewcommand{\arraystretch}{1.2}%
\begin{tabular}{|C{1.5cm}|C{0.7cm}|C{0.7cm}|C{1cm}|}
\cline{2-4}
\multicolumn{1}{c|}{} & \multicolumn{3}{c|}{\bf NMSE} \\
\hline
\shortstack{\\ \bf Number of \\ \bf people} &$v_1$  & $v_2$ & $v_1$ or $v_2$   \\ \hline
\bf N=5 & 0.06 & 0.20  & 0.13  \\ \hline
\bf N=9 & 0.16  & 0.29  & 0.23\\ \hline
\end{tabular}
\caption{NMSE of speed estimation based on the total number of people walking in the area.}\label{tab_nmse_people}
\end{minipage}
\end{table*}
\begin{figure*}[t]
\centering
\begin{center}
\includegraphics[width=0.999\linewidth]{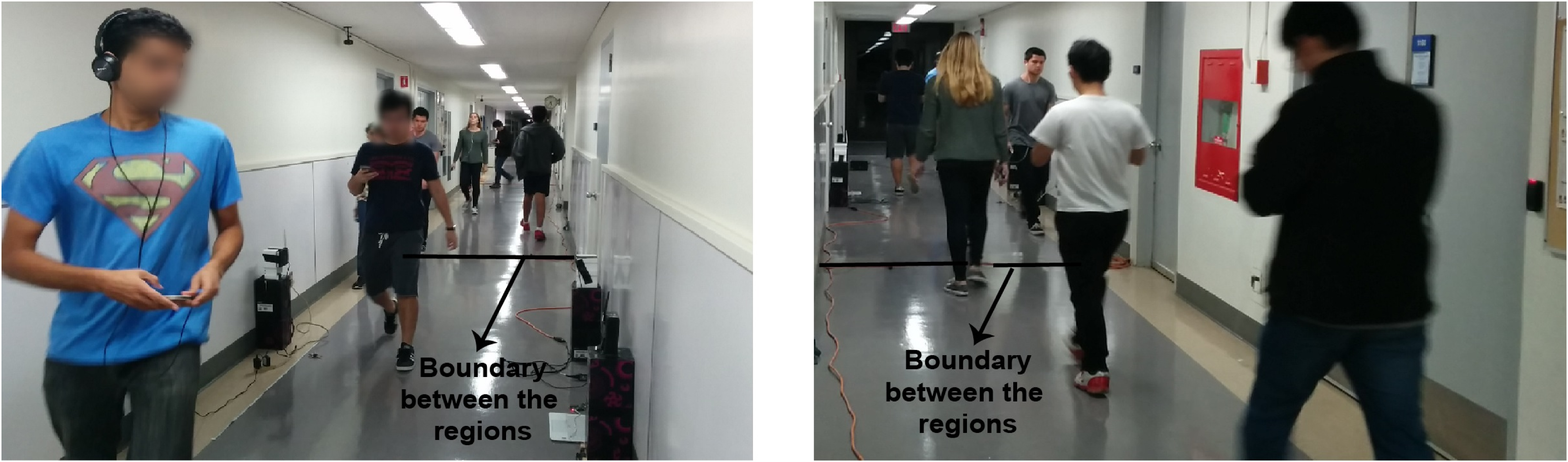}
\caption{The indoor area of interest with two snapshots of people walking in the area. The black line separates the area into two regions. People move casually throughout the area with the given region-specific speed. A pair of WiFi links located in Region 1 makes wireless measurements to estimate the speed of people in both regions.}
\label{fig_indoor_scenario}
\end{center}
\end{figure*}
We have conducted several experiments in these areas with different number of people walking at a variety of speeds. More specifically, we test the proposed methodology with $9$ possible combinations of speeds for $(v_1, v_2)$ for the two adjacent regions. For each pair of speeds, we then run a number of experiments with both $5$ and $9$ people walking in the area. For any given speed, people are instructed on how to walk with that specific speed as discussed in Section \ref{sec_exp_setup}.
Table \ref{tab_sample_result_outdoor} shows a sample performance of our approach when $5$ people are walking in the outdoor area and for all the 9 speed combinations, while Table \ref{tab_sample_result_indoor} shows a sample performance when $9$ people are walking in the indoor area. It can be seen that our proposed methodology can estimate the speeds of people in the adjacent regions with a good accuracy, for both indoor and outdoor cases, by using a pair of WiFi links located in only one region. 

To further validate our framework statistically, we repeat each speed pair 3 times, on different days, for both cases of $5$ and $9$ people walking in the area. This amounts to 108 overall sets of experiments. To evaluate the performance, we calculate the NMSE. 
Table \ref{tab_nmse_all} shows the overall NMSE of the estimation error for speed of Region 1 as 0.11, for speed of Region 2 as 0.24, and for the speed in any of the two regions as 0.18, confirming a good performance. Fig.~\ref{fig_cdf_plot_all} further shows the Cumulative Distribution Function (CDF) of the Normalized Square Error (NSE) for the speed of Region 1, Region 2, and the speed in any region. It can be seen that the NSE is less than $0.15$, $90 \%$ of the time for $v_1$ and $70 \% $ of the time for $v_2$, further confirming a good performance.  We note that the estimation of $v_1$, i.e., the speed of the region where the links are located, is more accurate as compared to $v_2$. 
We further note that the convergence time of the presented speed estimation results is typically within a couple of minutes, with several cases (those with higher speeds) converging in much less than a minute.
\begin{figure*}
\begin{minipage}[c]{0.3\textwidth}
\includegraphics[width=1\linewidth]{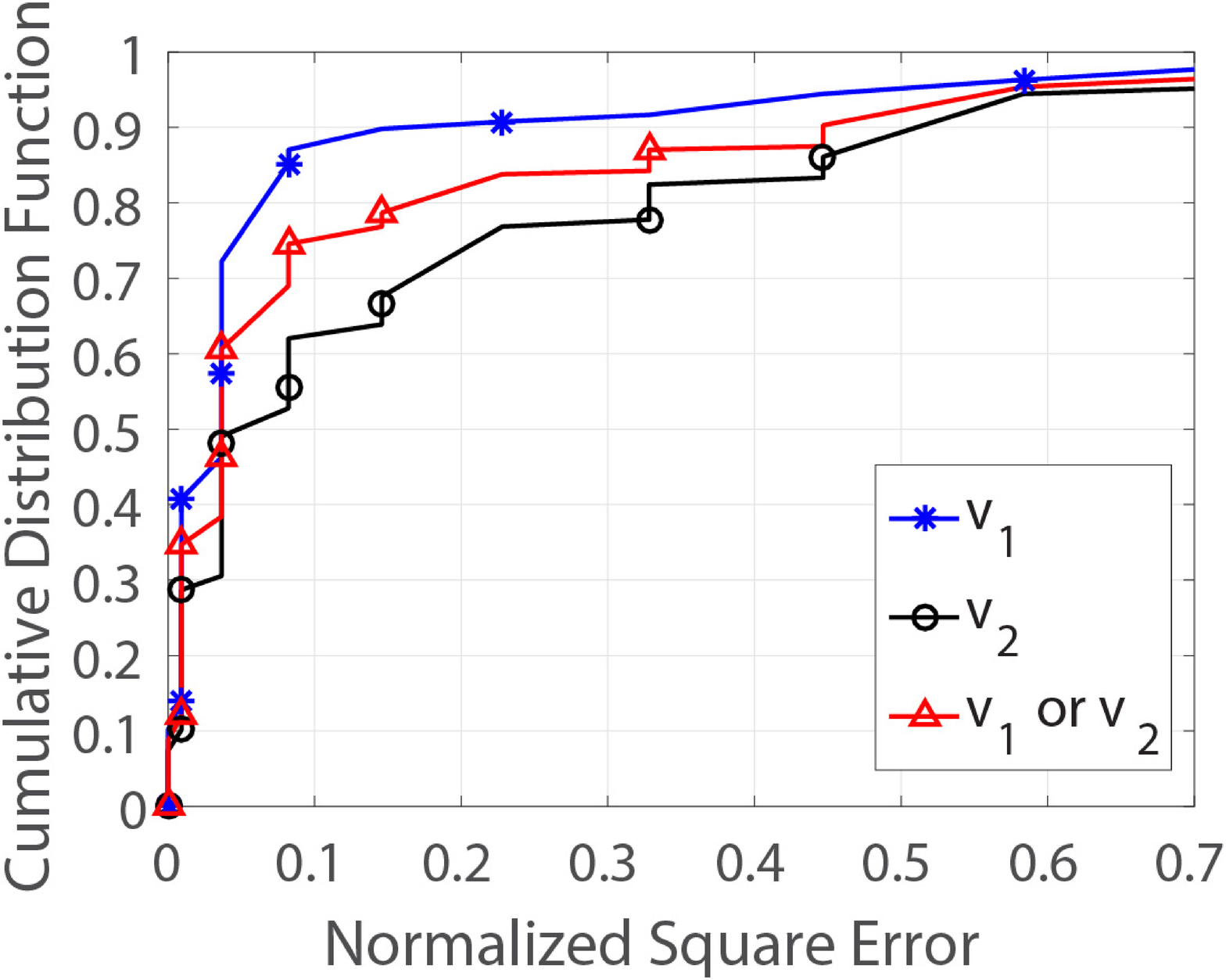}
\caption{CDF of the normalized square error for speeds in Region 1 ($v_1$), Region 2 ($v_2$), and for the speeds in any region. It can be seen that our approach estimates the speeds with a good accuracy.}
\label{fig_cdf_plot_all}
\end{minipage}
\hspace{0.15in}
\begin{minipage}[c]{0.3\textwidth}	
\includegraphics[width=0.95\linewidth]{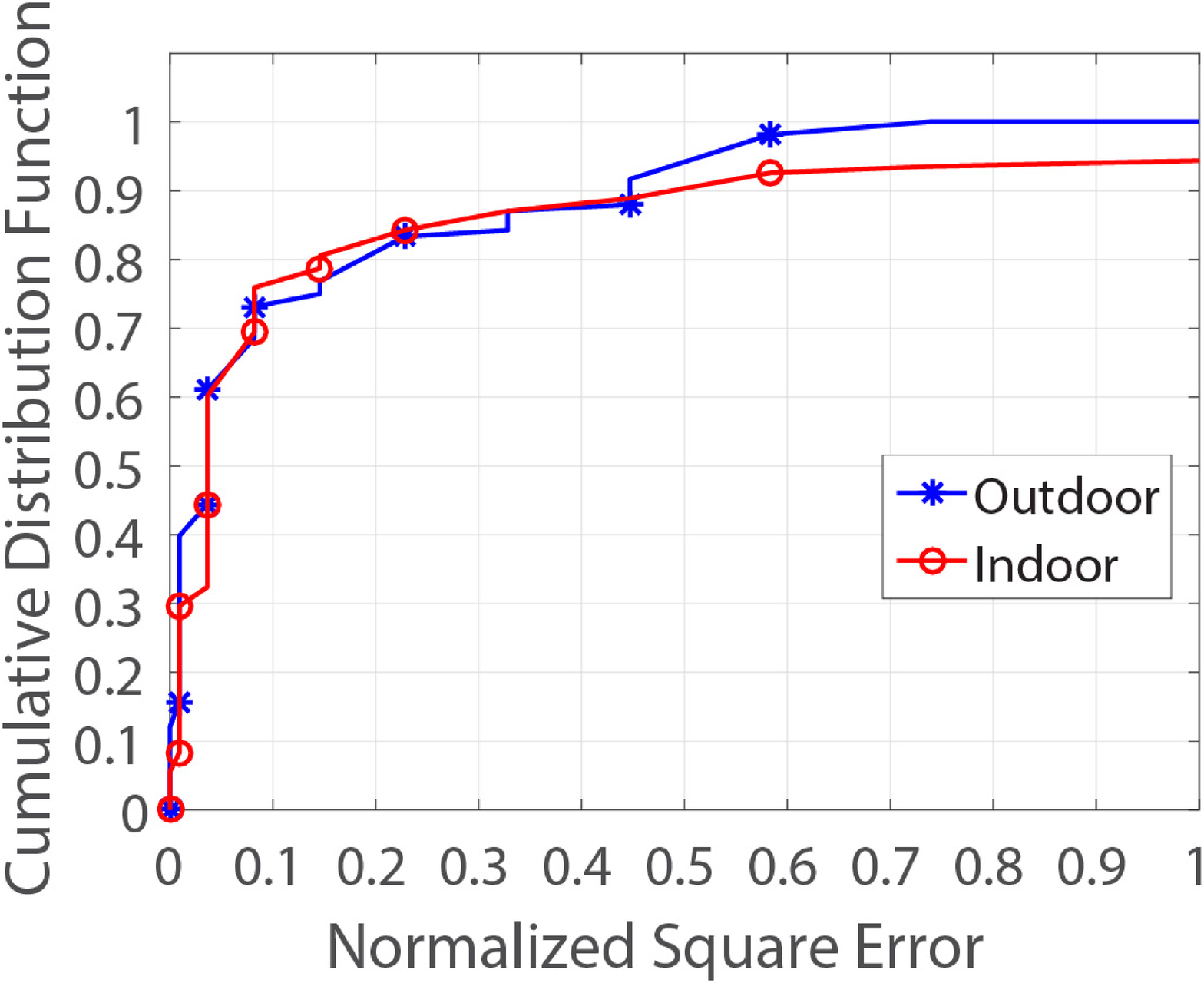}
\caption{CDF of the normalized square error based on the location of the experiment. It can be seen that the outdoor location has a slightly better performance than indoor, as expected.}
\label{fig_cdf_plot_region}
\end{minipage}
\hspace{0.2in}
\begin{minipage}[c]{0.3\textwidth}
\includegraphics[width=1\linewidth]{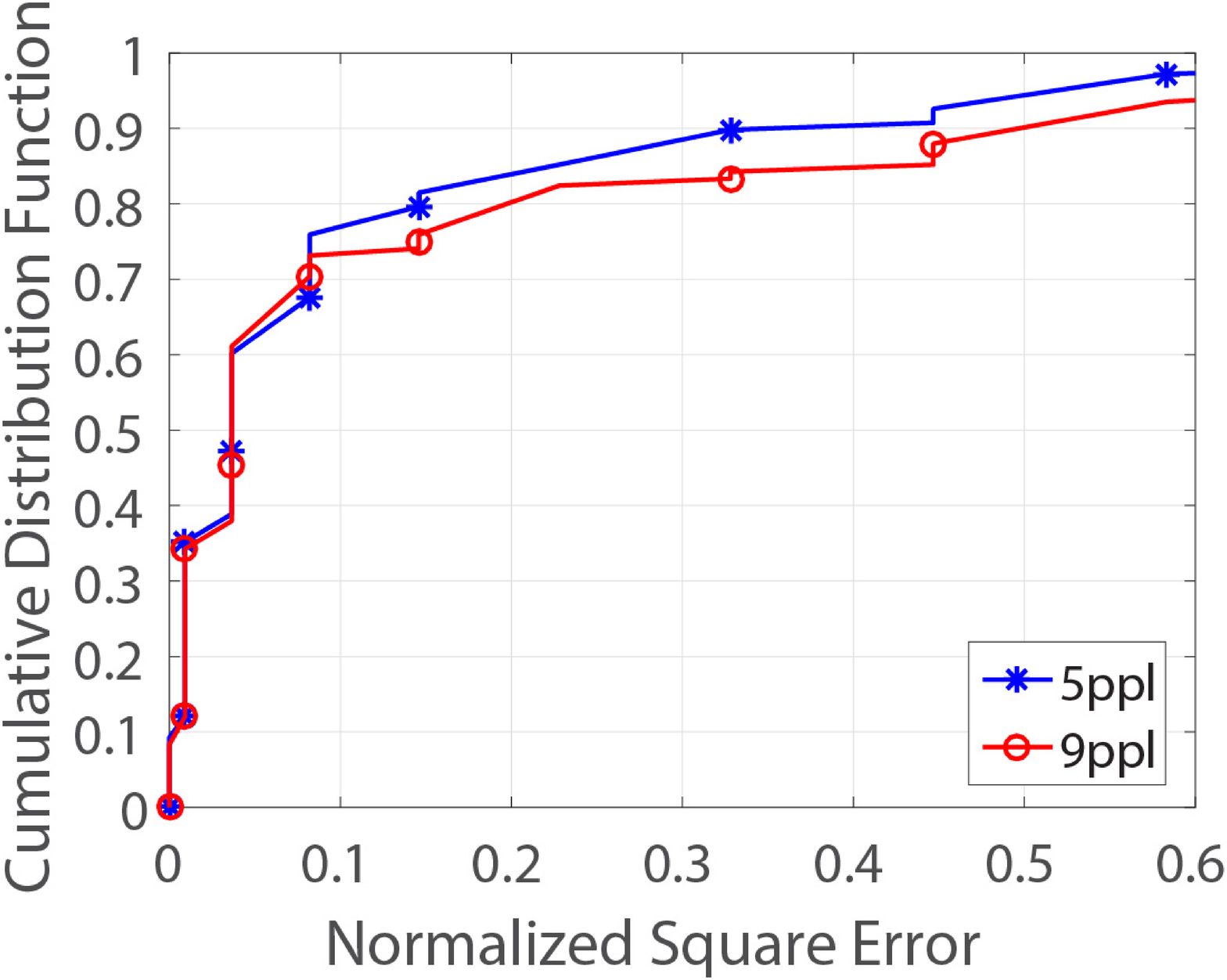}
\caption{CDF of the normalized square error based on the total number of people. It can be seen that the estimation error slightly increases for $9$ people as compared to the case of $N=5$.}\label{fig_cdf_plot_people}
\label{fig_cdf_plot}
\end{minipage}
\end{figure*}
\subsubsection{Speed Classification Performance}
Thus far, we have established that our approach can successfully estimate the region-dependent speeds of people walking in two adjacent regions, based on WiFi RSSI measurements in only one region. However, for some applications, an exact speed estimation may not be necessary. Rather, a classification of the pace to low, normal walking, or high may suffice. Therefore, we next show the classification performance of the proposed approach to Low ($0.3\ m/s$), Normal walking ($0.8\ m/s$), or High ($1.6\ m/s$) speeds. More specifically, we classify the estimated speed $\widehat{v_i}$ using nearest neighbor classifier 
as Low if $\widehat{v_i} \leq 0.55\  m/s$,  Normal if $0.55 \  m/s < \widehat{v_i} \leq 1.2\  m/s$, and High if $\widehat{v_i} > 1.2\  m/s$, for $i \in \{1,2\}$. Table \ref{tab_class_accuracy} shows the accuracy of our classification for both indoor and outdoor cases and for different number of people. It can be seen that the overall classification accuracy of the speeds in either of the two regions is $85.2\%$ over all the experiments, confirming a good performance. For comparison, we note that the probability of correct classification would have been $33\%$ in any of the two regions for a random classifier.
\begin{table}[t]
\renewcommand{\arraystretch}{2.15}%
\scalebox{1}
{
\begin{tabular}{|C{2 cm}|C{1.5 cm}|C{1.5 cm}|c|}
\cline{2-4}
\multicolumn{1}{c|}{} & \multicolumn{3}{c|}{\textbf{Classification accuracy (in \%)}}  \\
\hline
\shortstack{\textbf{Experiment} \\ \textbf{scenario}} & \shortstack{$v_1$} & \shortstack{$v_2$} &\shortstack{ ${v_1}$ or $v_2$}   \\ \hline
\shortstack{\bf Outdoor \\\bf N=5 people }&100& 81.4&90.7\\ \hline
\shortstack{\bf Outdoor \\ \bf N=9 people}&88.9&77.8&83.4\\ \hline
\shortstack{\bf Indoor \\ \bf N=5 people}&100&66.7&83.3\\ \hline
\shortstack{\bf Indoor \\ \bf N=9 people}&92.6&74.1&83.3\\ \hline
\shortstack{\bf All \\ \bf experiments}&95.4&75& 85.2\\ \hline
\end{tabular}
}		
\caption{Performance of speed classification to High, Normal Walking, and Low for indoor and outdoor cases, and for different number of pedestrians. }\label{tab_class_accuracy}
\end{table}
\subsubsection{Underlying Trends of Speed Estimation}
We next discuss some of the underlying characteristics of the results, starting with the impact of the experiment location.  Table \ref{tab_nmse_region} and Fig.~\ref{fig_cdf_plot_region} show the NMSE of the estimation error and the CDF of the normalized square error respectively, based on all the experiments in each location. While the estimation error in the indoor environment is still small, the estimation error is less in the outdoor environment as expected, due to the smaller amount of multipath. Furthermore, Table \ref{tab_nmse_people} and Fig.~\ref{fig_cdf_plot_people} show the performance as a function of the total number of pedestrians.  It can be seen that the estimation error is slightly higher for $N=9$ people as compared to $N=5$. 
\subsubsection{Sensitivity to $\theta_\textnormal{max}$}
As described in Section \ref{sec_speed_estimation}, we assume $\theta_\textnormal{max} = 45\degree$ in our models of the closed area, which characterizes the flow of people in hallway-type scenarios. We next show the sensitivity of our results to the assumed value of $\theta_\textnormal{max}$. More specifically, we assume a broad range of values for $\theta_\textnormal{max}$ to characterize the flow of people in our experiments and estimate the speeds of people accordingly. Fig. \ref{fig_sensitivity} shows the NMSE of the estimated speeds in the two regions as a function of the assumed value of $\theta_\textnormal{max}$. It can be seen that the estimation error is nearly constant over a broad range of $\theta_\textnormal{max}$, which shows that our approach is robust and not that sensitive to the exact choice of $\theta_\textnormal{max}$.
\begin{figure}[t]
\centering
\includegraphics[width=0.65\linewidth]{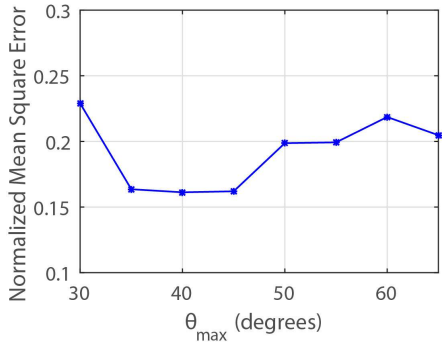}
\caption{Effect of the assumed value of $\theta_\textnormal{max}$ on the Normalized Mean Square Error of the estimated speeds in the two regions. It can be seen that NMSE is low for a broad range of $\theta_\textnormal{max}$, which shows that it is not that sensitive to the exact choice of $\theta_\textnormal{max}$}.
\label{fig_sensitivity}
\end{figure}
\subsection{Museum Experiments}
So far, we presented our experimental results for several cases in which people are walking with a variety of speeds in two adjacent regions of an area. We next consider a museum-type scenario, in which there are two adjacent exhibitions, showcasing two very different types of displays. We then utilize our methodology to estimate the visitor speeds in both exhibits, and deduce which exhibit is more popular. By more popular, we mean that the exhibit received more attention, i.e., people slowed down to spend more time there.
\begin{figure}[t]
\centering
\includegraphics[width=0.9\linewidth]{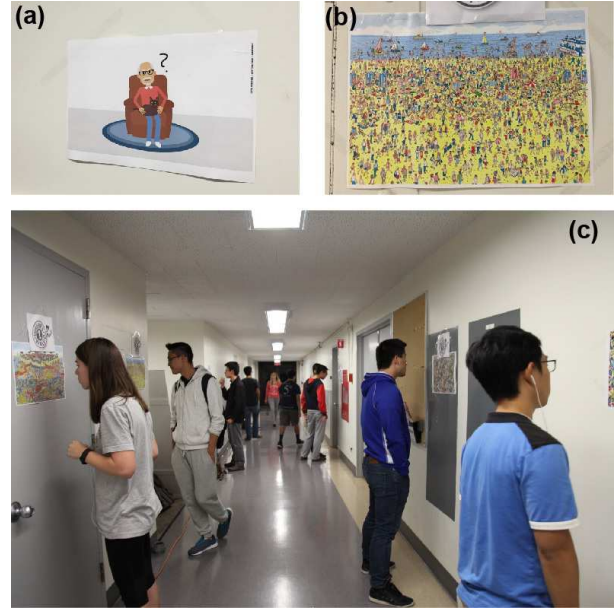}
\caption{Our museum which contains two exhibits -- (a) a sample display in the exhibit of Region 1, which contains non-engaging items, (b) a sample display in the exhibit of Region 2, which contains more engaging displays such as ``Where is Waldo?", and (c) a snapshot of the visitors exploring the museum.}
\label{fig_waldo_scenario}
\end{figure}
For the purpose of this experiment, we stage an exhibition with two types of exhibits in two adjacent regions. We place basic visually-boring displays on the walls of Region 1, such as basic pictures, list of alphabets, and list of numbers. In Region 2, on the other hand,  we place more visually-involved displays such as ``Where is Waldo"  pictures \cite{waldoref}. Fig.~\ref{fig_waldo_scenario} (a) and (b) show a sample display in Region 1 and Region 2 respectively. We use the indoor experiment site shown in Fig.~\ref{fig_indoor_scenario} for this experiment. We then invite $10$ people (randomly selected from our advertisement) to visit this museum. The visitors do not have any background about our experiments. Upon arrival, they are told to explore the area that consists of the two exhibits as it interests them.  Note that we do not ask people to walk at a particular speed in a given region, as we did in the validation experiments.
Fig.~\ref{fig_waldo_scenario} (c) shows a snapshot of the museum with people exploring the exhibits. We use the same Tx/Rx locations in Region 1 of Fig.~\ref{fig_indoor_scenario} and collect the data for 5 minutes. In this setting, we observe that people stop at a display that interests them before moving on to explore other displays. The experiment is videotaped in both regions and the ground-truth average speeds of people in Region 1 and Region 2 are visually estimated as $1.1\ m/s$  and $0.12\ m/s$, respectively, by extracting the time spent by each person in the two regions from the video. We then use our proposed approach to estimate the average speeds in the two regions of the museum. Fig.~\ref{fig_waldo_result} shows the estimated average speeds in the two regions as a function of time. It can be seen that the speed of people in the Exhibit of Region 2, which contains the Waldo pictures, is estimated as $0.3\ m/s$, indicating a significant slow down, while the speed in the Exhibit of Region 1 is estimated as $1\ m/s$, which is a normal walking speed. It can be seen that these estimates are consistent with the ground-truth and what one would expect based on the level of engagement of the displays. The estimates further indicate that Exhibit 2 was more engaging and popular since it was estimated that people significantly slowed down there. This shows the potential of the proposed methodology for estimating the level of popularity of adjacent displays, based on only sensing and measurement in one of the regions.

\begin{figure}[t]
\centering
\includegraphics[width=0.76\linewidth]{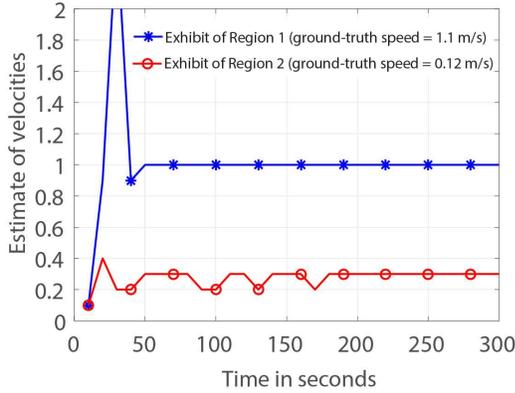}
\caption{Our estimates of the speeds in the two exhibits of the museum experiment of Fig.~\ref{fig_waldo_scenario}. The speed in the Exhibit of Region 2, which contains the Waldo pictures, is estimated as $ 0.3\ m/s$, indicating a significant slow down, while the speed in Region 1 is estimated as $ 1\ m/s$, which is a normal walking speed. The results further indicate that the exhibit of Region 2 was more engaging and popular.}
\label{fig_waldo_result}
\end{figure}
\subsection{Costco Experiments}
In this section, we use our framework to estimate the motion behavior of the buyers in an aisle of a retail store, Costco \cite{depatla2018secon}. Since people constantly come and go through the aisle, this will be an example of the open area scenario of Fig.~\ref{fig_illustration}b. Since the aisle that we were assigned by the store for our experiments only contained one kind of products, we then estimate the rate of arrival of people into the aisle, and the speed at which people walk while they are exploring the aisle (using the same framework), thus assessing the popularity of the products in the aisle.

Fig.~\ref{fig_costco} shows the aisle of interest in our local Costco. This aisle contains a specific type of merchandise, snacks and cookies in this case. Both ends of the aisle are open and people can enter/exit from either end of the aisle. 
Since the aisle contains the same type of products, we take the entire aisle as a single region (i.e., $v_1=v_2$), but assume the rate of arrival (or equivalently $N_\textnormal{avg}$) to be unknown as well.
It is expected that people walk at a slow pace if the products in the aisle generate interest and they consider buying them. We are thus interested in estimating such behaviors. A pair of WiFi links are located along the aisle, as indicated in Fig.~\ref{fig_costco}, and make wireless measurements as people walk through the aisle. We then use our approach of Section \ref{sec_speed_estimation} to estimate the speed of people in the aisle as well as their rate of arrival into the aisle. 

Since the probability of crossing link $i$, $p_{c,i}= \lambda \delta t, \textnormal{\ for \ } i \in \{1,2\}$, the rate of arrival $\lambda$ is estimated as
$\widehat{\lambda} = \frac{p_{c,1} + p_{c,2}}{2\delta t}$.
In order to estimate the speed of people walking in the aisle, we further use the cross-correlation between the two WiFi links given by Eq.~(\ref{eq_cross_corr}).



\end{section}
\begin{figure}[t]
\centering
\includegraphics[width=0.9\linewidth]{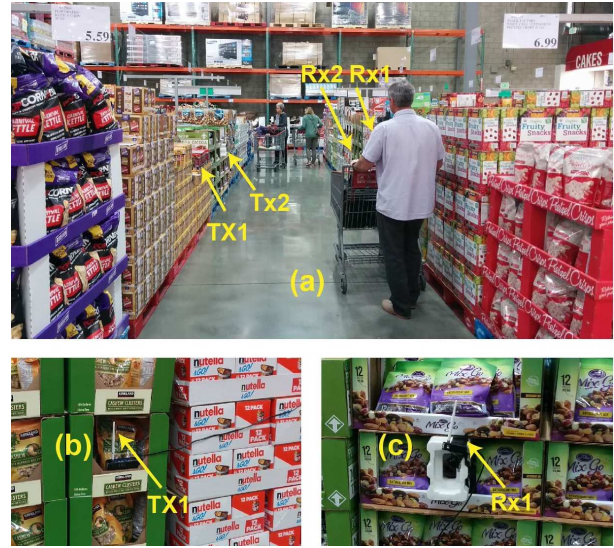}
\caption{The Costco experiment -- (a) shows the considered ``snacks and cookies" aisle in Costco, while (b) and (c) show a pair of our WiFi nodes positioned along the aisle to make wireless measurements.}\label{fig_costco}
\end{figure}
\begin{figure}[t]
\centering
\includegraphics[width=0.8\linewidth]{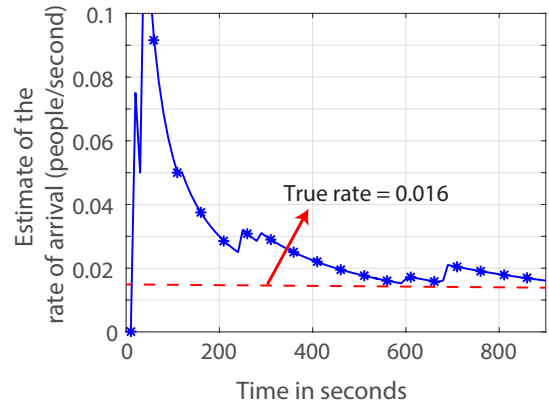}
\vspace{-0.05in}
\caption{The estimate of the rate of arrival of people into the aisle of Fig.~\ref{fig_costco} at Costco, as a function of time. It can be seen that our framework correctly estimates the rate of arrival.}\label{fig_rate_func_time}
\end{figure}
We then collect wireless RSSI measurements for $15$ minutes as people walk through the aisle shown in Fig. \ref{fig_costco}. We manually record the times at which people arrive from either entrance of the aisle 
and compute the true rate of arrival. Fig.~\ref{fig_rate_func_time} shows the estimated rate of arrival as a function of time. It can be seen that our framework accurately estimates the rate of arrival of people into the aisle using a pair of WiFi links. Note that the rate of arrival on that particular day/time was
$1$ person per minute. Thus, our estimation converges relatively fast, within $400$ seconds, which is the time $6$ people visited the aisle. Furthermore, the average ground-truth speed of people walking in that aisle is estimated as $0.48\ m/s$, by manually recording the entrance and exit times of people in that aisle on $4$ different days. The average speed of people walking in the aisle is estimated as $0.2\ m/s$ using our framework, which is consistent with the ground-truth, and indicates a significant slow down, showcasing the popularity of the aisle. 

\begin{section}{Conclusion}\label{sec_conclusions}
In this paper, we proposed a framework to estimate the average speeds of pedestrians in two adjacent regions, by using RSSI measurements of a pair of WiFi links in only one region. Our approach only relies on WiFi signal availability in the region where the links are located. Thus, it not only allows for estimating the speed of a crowd in the immediate region where the pair of links are, but also enables deducing the speed of the crowd in the adjacent WiFi-free regions. More specifically, we showed how two key statistics, the probability of crossing and the cross-correlation between the two links, carry key information about the pedestrian speeds in the two regions and mathematically characterized them as a function of the speeds. To validate our framework, we ran extensive experiments (total of 108) in indoor and outdoor locations with up to $10$ people, with a variety of speeds per region, and showed that our approach can accurately estimate the speeds of pedestrians in both regions. Furthermore, we tested our methodology in a museum setting, with two different exhibitions in adjacent areas, and estimated the average pedestrian speeds in both exhibits, thus deducing which exhibit was more popular. Finally, we used our framework in Costco, estimated the motion behavior of buyers in an aisle, and deduced the popularity of the products located in that aisle.
\end{section}




\bibliographystyle{IEEEtran}
\bibliography{ref_iot}

\begin{thebibliography}{10}
\providecommand{\url}[1]{#1}
\csname url@samestyle\endcsname
\providecommand{\newblock}{\relax}
\providecommand{\bibinfo}[2]{#2}
\providecommand{\BIBentrySTDinterwordspacing}{\spaceskip=0pt\relax}
\providecommand{\BIBentryALTinterwordstretchfactor}{4}
\providecommand{\BIBentryALTinterwordspacing}{\spaceskip=\fontdimen2\font plus
\BIBentryALTinterwordstretchfactor\fontdimen3\font minus
  \fontdimen4\font\relax}
\providecommand{\BIBforeignlanguage}[2]{{%
\expandafter\ifx\csname l@#1\endcsname\relax
\typeout{** WARNING: IEEEtran.bst: No hyphenation pattern has been}%
\typeout{** loaded for the language `#1'. Using the pattern for}%
\typeout{** the default language instead.}%
\else
\language=\csname l@#1\endcsname
\fi
#2}}
\providecommand{\BIBdecl}{\relax}
\BIBdecl

\bibitem{franvek2013environmental}
M.~Fran{\v{e}}k, ``Environmental factors influencing pedestrian walking
  speed,'' \emph{Perceptual and motor skills}, vol. 116, no.~3, pp. 992--1019,
  2013.

\bibitem{laplante2004continuing}
J.~N. LaPlante and T.~P. Kaeser, ``The continuing evolution of pedestrian
  walking speed assumptions,'' \emph{Institute of Transportation Engineers. ITE
  Journal}, vol.~74, no.~9, p.~32, 2004.

\bibitem{liu2016occupancy}
P.~Liu, S.-K. Nguang, and A.~Partridge, ``Occupancy inference using
  pyroelectric infrared sensors through hidden markov models,'' \emph{IEEE
  Sensors Journal}, vol.~16, no.~4, pp. 1062--1068, 2016.

\bibitem{yang2016review}
J.~Yang, M.~Santamouris, and S.~E. Lee, ``Review of occupancy sensing systems
  and occupancy modeling methodologies for the application in institutional
  buildings,'' \emph{Energy and Buildings}, pp. 344--349, 2016.

\bibitem{yang2014novel}
B.~Yang, J.~Luo, and Q.~Liu, ``A novel low-cost and small-size human tracking
  system with pyroelectric infrared sensor mesh network,'' \emph{Infrared
  Physics \& Technology}, vol.~63, pp. 147--156, 2014.

\bibitem{yun2014human}
J.~Yun and S.-S. Lee, ``Human movement detection and identification using
  pyroelectric infrared sensors,'' \emph{Sensors}, vol.~14, no.~5, pp.
  8057--8081, 2014.

\bibitem{khanestimating}
S.~D. Khan, ``Estimating speeds and directions of pedestrians in real-time
  videos: A solution to road-safety problem.''

\bibitem{sourtzinos2011highly}
P.~Sourtzinos, D.~Makris, and P.~Remagnino, ``Highly accurate estimation of
  pedestrian speed profiles from video sequences,'' in \emph{Innovations in
  Defence Support Systems--3}.\hskip 1em plus 0.5em minus 0.4em\relax Springer,
  2011, pp. 71--81.

\bibitem{wang2014pedestrian}
G.~Wang, J.~Li, P.~Zhang, X.~Zhang, and H.~Song, ``Pedestrian speed estimation
  based on direct linear transformation calibration,'' in \emph{Audio, Language
  and Image Processing (ICALIP), 2014 International Conference on}.\hskip 1em
  plus 0.5em minus 0.4em\relax IEEE, 2014, pp. 195--199.

\bibitem{dridi2014tracking}
M.~H. Dridi, ``Tracking individual targets in high density crowd scenes
  analysis of a video recording in hajj 2009,'' \emph{arXiv preprint
  arXiv:1407.2044}, 2014.

\bibitem{video_analytics_survey2015}
{New CSC Research Reveals Where Shoppers and Retailers Stand on Next Generation
  In-store Technology}, 2015,
  https://turtl.dxc.technology/story/55ee93d8bbfd077f2d4e22ee.pdf?

\bibitem{opinionlab_survey}
{New study: consumers overwhelmingly reject in-store tracking by retailers},
  2014,
  www.opinionlab.com/newsmedia/new-study-consumers-overwhelmingly-reject-in-store-tracking-by-retailers.

\bibitem{walmart}
{Smile this Black Friday, you might well be on camera!}, 2015,
  https://nakedsecurity.sophos.com/2015/11/24/smile-this-black-friday-you-might-well-be-on-camera/.

\bibitem{depatla2015occupancy}
S.~Depatla, A.~Muralidharan, and Y.~Mostofi, ``Occupancy estimation using only
  {W}i{F}i power measurements,'' \emph{IEEE Journal on Selected Areas in
  Communications}, vol.~33, no.~7, pp. 1381--1393, 2015.

\bibitem{depatla2018percom}
S.~Depatla and Y.~Mostofi, ``{Crowd Counting Through Walls Using WiFi},''
  \emph{to appear in IEEE International Conference on Pervasive Computing and
  Communications}, 2018.

\bibitem{xi2014electronic}
W.~Xi, J.~Zhao, X.-Y. Li, K.~Zhao, S.~Tang, X.~Liu, and Z.~Jiang, ``Electronic
  frog eye: Counting crowd using {w}i{f}i,'' in \emph{IEEE INFOCOM 2014-IEEE
  Conference on Computer Communications}.\hskip 1em plus 0.5em minus
  0.4em\relax IEEE, 2014, pp. 361--369.

\bibitem{xu2013scpl}
C.~Xu, B.~Firner, R.~S. Moore, Y.~Zhang, W.~Trappe, R.~Howard, F.~Zhang, and
  N.~An, ``{SCPL}: indoor device-free multi-subject counting and localization
  using radio signal strength,'' in \emph{Information Processing in Sensor
  Networks (IPSN), 2013 ACM/IEEE International Conference on}.\hskip 1em plus
  0.5em minus 0.4em\relax IEEE, 2013, pp. 79--90.

\bibitem{wu2016widir}
D.~Wu, D.~Zhang, C.~Xu, Y.~Wang, and H.~Wang, ``{W}i{D}ir: walking direction
  estimation using wireless signals,'' in \emph{Proceedings of the 2016 ACM
  International Joint Conference on Pervasive and Ubiquitous Computing}.\hskip
  1em plus 0.5em minus 0.4em\relax ACM, 2016, pp. 351--362.

\bibitem{jiang2014communicating}
Z.-P. Jiang, W.~Xi, X.~Li, S.~Tang, J.-Z. Zhao, J.-S. Han, K.~Zhao, Z.~Wang,
  and B.~Xiao, ``Communicating is crowdsourcing: {W}i-{F}i indoor localization
  with {CSI}-based speed estimation,'' \emph{Journal of Computer Science and
  Technology}, vol.~29, no.~4, pp. 589--604, 2014.

\bibitem{karanam20173d}
C.~R. Karanam and Y.~Mostofi, ``{3D through-wall imaging with unmanned aerial
  vehicles using WiFi},'' in \emph{Proceedings of the 16th ACM/IEEE
  International Conference on Information Processing in Sensor Networks}.\hskip
  1em plus 0.5em minus 0.4em\relax ACM, 2017, pp. 131--142.

\bibitem{nordstorm}
\emph{{Attention, Shoppers: Store Is Tracking Your Cell}}, 2013,
  \url{http://www.nytimes.com/2013/07/15/business/attention-shopper-stores-are-tracking-your-cell.html}.

\bibitem{sigg2014telepathic}
S.~Sigg, U.~Blanke, and G.~Tr{\"o}ster, ``The telepathic phone: Frictionless
  activity recognition from {W}i{F}i-{RSSI},'' in \emph{Pervasive Computing and
  Communications (PerCom), 2014 IEEE International Conference on}.\hskip 1em
  plus 0.5em minus 0.4em\relax IEEE, 2014, pp. 148--155.

\bibitem{shi2014monitoring}
S.~Shi, S.~Sigg, W.~Zhao, and Y.~Ji, ``Monitoring attention using ambient {FM}
  radio signals,'' \emph{IEEE Pervasive Computing}, vol.~13, pp. 30--36, 2014.

\bibitem{bocca2014multiple}
M.~Bocca, O.~Kaltiokallio, N.~Patwari, and S.~Venkatasubramanian, ``Multiple
  target tracking with {RF} sensor networks,'' \emph{IEEE Transactions on
  Mobile Computing}, vol.~13, no.~8, pp. 1787--1800, 2014.

\bibitem{zhang2013rass}
D.~Zhang, Y.~Liu, X.~Guo, and L.~M. Ni, ``{RASS}: A real-time, accurate, and
  scalable system for tracking transceiver-free objects,'' \emph{IEEE
  Transactions on Parallel and Distributed Systems}, vol.~24, no.~5, pp.
  996--1008, 2013.

\bibitem{depatla2018secon}
S.~Depatla and Y.~Mostofi, ``Title withheld to preserve anonymity of the
  submission,'' 2018.

\bibitem{weidmann2014pedestrian}
U.~Weidmann, U.~Kirsch, and M.~Schreckenberg, \emph{Pedestrian and evacuation
  dynamics 2012}.\hskip 1em plus 0.5em minus 0.4em\relax Springer Science \&
  Business, 2014.

\bibitem{meyer2000matrix}
C.~D. Meyer, \emph{Matrix analysis and applied linear algebra}.\hskip 1em plus
  0.5em minus 0.4em\relax Siam, 2000.

\bibitem{meyer1989stochastic}
C.D.Meyer, ``Stochastic complementation, uncoupling markov chains, and the
  theory of nearly reducible systems,'' \emph{SIAM review}, vol.~31, no.~2, pp.
  240--272, 1989.

\bibitem{rubino2014markov}
G.~Rubino and B.~Sericola, \emph{Markov chains and dependability theory}.\hskip
  1em plus 0.5em minus 0.4em\relax Cambridge University Press, 2014.

\bibitem{little2011or}
J.~D. Little, ``Or forum—little's law as viewed on its 50th anniversary,''
  \emph{Operations research}, vol.~59, no.~3, pp. 536--549, 2011.

\bibitem{waldoref}
{W}here~is {W}aldo, 2015, http://whereswaldo.com/index.html.

\end{thebibliography}

\begin{IEEEbiography}[{\includegraphics[width=0.95in,height=1.25in,clip]{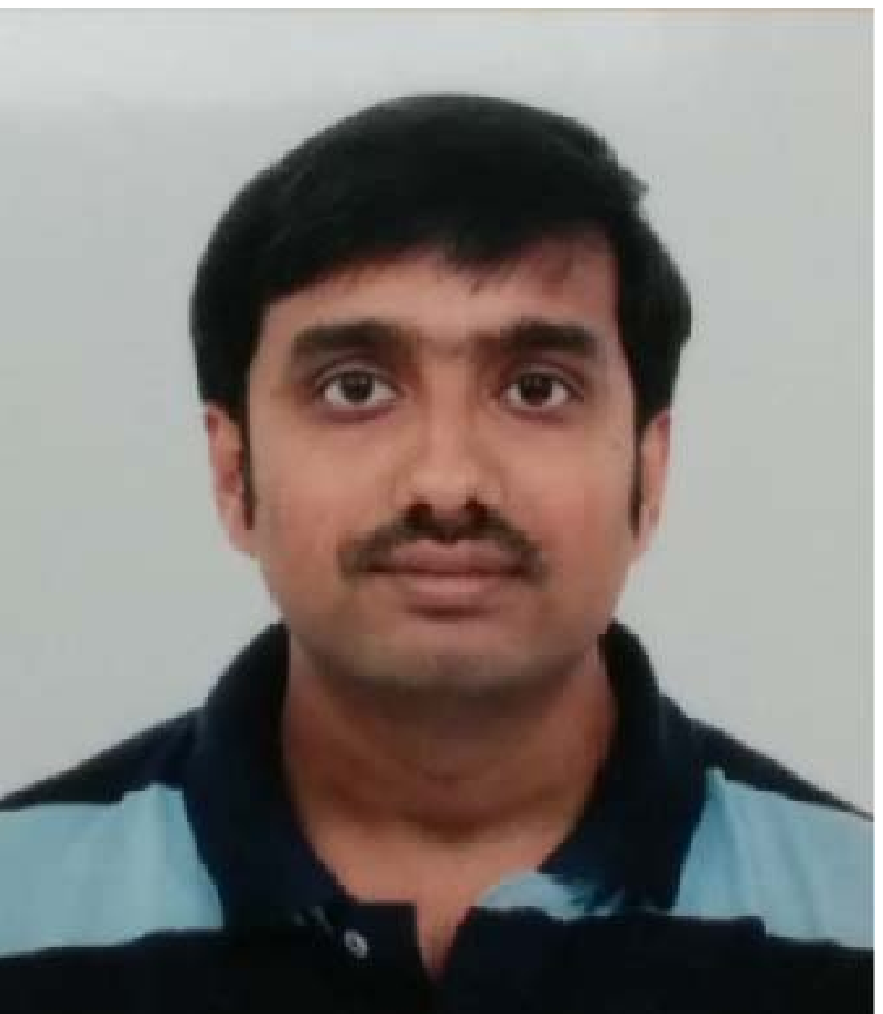}}]{Saandeep Depatla}
received the B.S. degree in electronics and communication engineering from the National Institute of Technology, Warangal, in 2010 and the M.S. degree in electrical and computer science engineering (ECE) from the University of California, Santa Barbara (UCSB), in 2014. From 2010 to 2012, he worked on developing antennas for radars in electronics and radar development establishment, India. Since 2013, he has been working towards the Ph.D. degree in ECE at UCSB. His research interests include ambient sensing using wireless signals, signal processing, and wireless communications. His research on RF sensing has appeared in several popular news venues such as Engadget and Huffington Post among others. 
\end{IEEEbiography}
\vspace{-5.5in}
\begin{IEEEbiography}[{\includegraphics[width=1in,height=1.25in,clip,keepaspectratio]{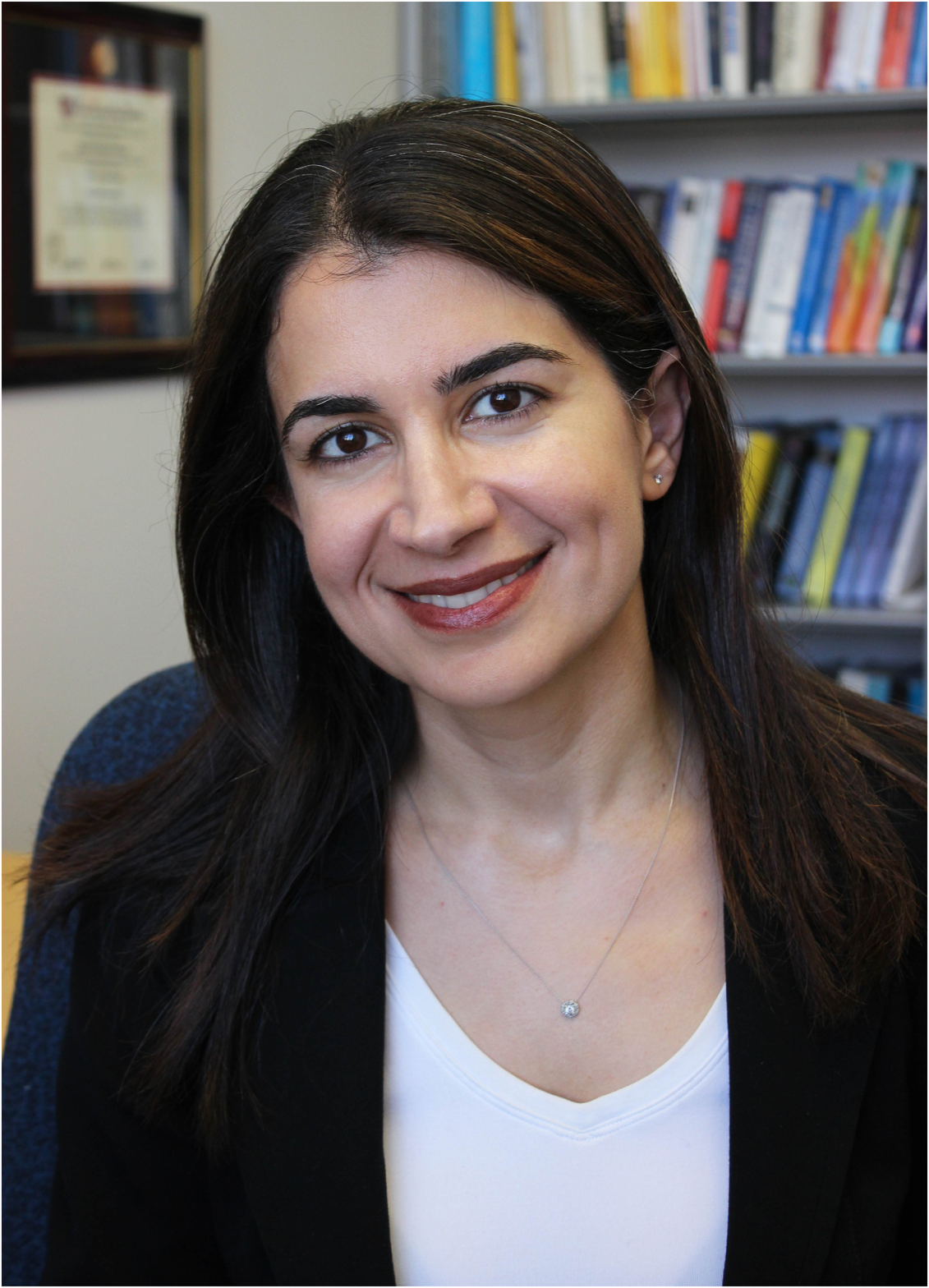}}]{Yasamin Mostofi}
	 received the B.S. degree in electrical engineering from Sharif University of Technology, and the M.S. and Ph.D. degrees from Stanford University. She is currently a professor in the Department of Electrical and Computer Engineering at the University of California Santa Barbara. Yasamin is the recipient of the 2016 Antonio Ruberti Prize from the IEEE Control Systems Society, the Presidential Early Career Award for Scientists and Engineers (PECASE), the National Science Foundation (NSF) CAREER award, and the IEEE 2012 Outstanding Engineer Award of Region 6 (more than 10 Western U.S. states), among other awards. Her research is at the intersection of communications and robotics, on mobile sensor networks. Current research thrusts include X-ray vision for robots, RF sensing, communication-aware robotics, occupancy estimation, see-through imaging, and human-robot networks. Her research has appeared in several reputable news venues such as BBC, Huffington Post, Daily Mail, Engadget, TechCrunch, NSF Science360, ACM News, and IEEE Spectrum, among others. 
\end{IEEEbiography}

\end{document}